\documentclass[final]{elsarticle}

\topmargin -0.2in  \headsep 0.4in  \textheight 8.0in
\oddsidemargin 0.3in  \evensidemargin 0.6in
\textwidth 6in

\usepackage{amsfonts}
\usepackage{color}
\usepackage{graphicx}
\usepackage[dvips]{epsfig}
\usepackage{graphics} 
\usepackage{times} 
\usepackage[cmex10]{amsmath} 
\usepackage{amssymb}  
\usepackage{multirow}
\usepackage[tight,footnotesize]{subfigure}
\usepackage{amsmath}
\usepackage[boxed,ruled,lined]{algorithm2e}
\usepackage{mathtools}
\usepackage{amsthm}

\def\norm #1{\left\|#1\right\|}

\def\twon #1{\left\|#1\right\|_2}
\def\onen #1{\left\|#1\right\|_1}

\def\frobn #1{\left\|#1\right\|_{\text{F}}}

\def\sgn #1{\text{sgn}#1}
\def\abs #1{\left|#1\right|}

\def\st{\text{subject to }}

\def\bC{\mathbb{C}}

\def\bR{\mathbb{R}}

\def\bT{\mathbb{T}}

\def\m #1{\boldsymbol{#1}}

\def\cA{\mathcal{A}}

\def\cI{\mathcal{I}}

\def\cK{\mathcal{K}}
\def\cL{\mathcal{L}}

\def\bee{\begin{equation}}
\def\ene{\end{equation}}

\def\beq{\begin{eqnarray}}
\def\enq{\end{eqnarray}}
\def\lentwo{\setlength\arraycolsep{2pt}}

\newtheorem{lem}{Lemma}
\newtheorem{rem}{Remark}
\newtheorem{cor}{Corollary}
\newtheorem{thm}{Theorem}

\newtheorem{exa}{Example}

\def\equ #1{\begin{equation}#1\end{equation}}
\def\equa #1{\begin{eqnarray}#1\end{eqnarray}}
\def\sbra #1{\left(#1\right)}
\def\mbra #1{\left[#1\right]}
\def\lbra #1{\left\{#1\right\}}
\def\diag #1{\text{diag}#1}
\def\tr #1{\text{tr}#1}
\def\supp #1{\text{supp}#1}
\def\rank #1{\text{rank}#1}
\def\st {\text{ subject to }}


\begin{document}

\begin{frontmatter}

\title{Frequency-Selective Vandermonde Decomposition of Toeplitz Matrices with Applications}

\author{Zai Yang\fnref{fn1}}
\ead{yangzai@njust.edu.cn}
\author{Lihua Xie\fnref{fn3}}
\ead{elhxie@ntu.edu.sg}
\fntext[fn1]{School of Automation, Nanjing University of Science and Technology, Nanjing 210094, China}
\fntext[fn3]{School of Electrical and Electronic Engineering, Nanyang Technological University, Singapore 639798\newline \phantom{this}This paper has been accepted by {\em Signal Processing} (manuscript in May 2016; last updated in July 2017; first presented at Chinese Control Conference (CCC) in July 2016).}





\begin{abstract} The classical result of Vandermonde decomposition of positive semidefinite Toeplitz matrices, which dates back to the early twentieth century, forms the basis of modern subspace and recent atomic norm methods for frequency estimation. In this paper, we study the Vandermonde decomposition in which the frequencies are restricted to lie in a given interval, referred to as frequency-selective Vandermonde decomposition. The existence and uniqueness of the decomposition are studied under explicit conditions on the Toeplitz matrix. The new result is connected by duality to the positive real lemma for trigonometric polynomials nonnegative on the same frequency interval. Its applications in the theory of moments and line spectral estimation are illustrated. In particular, it provides a solution to the truncated trigonometric $K$-moment problem. It is used to derive a primal semidefinite program formulation of the frequency-selective atomic norm in which the frequencies are known {\em a priori} to lie in certain frequency bands. Numerical examples are also provided.
\end{abstract}

\begin{keyword}
Frequency-selective Vandermonde decomposition, Toeplitz matrix, truncated trigonometric $K$-moment problem, line spectral estimation, atomic norm.
\end{keyword}

\end{frontmatter}

\section{Introduction}

A classical result discovered by Carath\'{e}odory and Fej\'{e}r in 1911 \cite{caratheodory1911zusammenhang} states that, if an $N\times N$ Hermitian Toeplitz matrix $\m{T}$ is positive semidefinite (PSD) and has rank $r\leq N$, then it can be factorized as
\equ{\m{T} = \m{A}\m{P}\m{A}^H, \label{eq:vanderdecp1}}
where $\m{P}$ is an $r\times r$ positive definite diagonal matrix and $\m{A}$ is an $N\times r$ Vandermonde matrix whose columns are discrete sinusoidal waves with distinct frequencies. Moreover, such a decomposition is unique if $r<N$. This Vandermonde decomposition result has become important for information and signal processing since the 1970s when it was rediscovered by Pisarenko and used for frequency estimation by interpreting the Toeplitz matrix $\m{T}$ as the data covariance matrix. The Vandermonde decomposition in \eqref{eq:vanderdecp1} is therefore also referred to as the Carath\'{e}odory-Fej\'{e}r-Pisarenko decomposition. As a result of this rediscovery, a class of methods have been developed for frequency estimation based on the signal subspace of a data covariance estimate, known as the subspace-based methods. Prominent examples are multiple signal classification (MUSIC), estimation of parameters by rotational invariant techniques (ESPRIT) and various variants of them (see the review in \cite{stoica2005spectral}). Besides, this decomposition result is important in moment theory, operator theory and system theory \cite{grenander1958toeplitz, georgiou2007caratheodory}. As an example, it can be applied to give a solution to the truncated trigonometric moment problem (a.k.a.~the moment problem on the unit circle given a finite moment sequence) \cite{akhiezer1965classical}.

In the past few years, a new class of methods for frequency estimation have been devised, namely the gridless sparse methods (see the review in \cite{yang2016sparse}), in which the Vandermonde decomposition is evoked and plays an important role. It is well-known that sparse methods for frequency estimation developed in the past two decades exploit the signal sparsity, which arises naturally from the fact that the number of frequencies is small, and attempt to find, among all candidates consistent with the observed data, the solution consisting of the smallest number of frequencies. Since frequency estimation is a highly nonlinear problem and to overcome such nonlinearity, gridding in the continuous frequency domain used to be a standard ingredient of early sparse methods, which transforms approximately the original nonlinear continuous parameter estimation problem as a problem of sparse signal recovery from a linear system of equations (see, e.g., \cite{gorodnitsky1997sparse,malioutov2005sparse}). The newly developed gridless sparse methods completely avoid gridding, work directly in the continuous domain, and have strong theoretical guarantees. These methods have been developed based on the atomic norm \cite{chandrasekaran2012convex,candes2013towards,candes2013super, bhaskar2013atomic,tang2012compressed}---a continuous analogue of the $\ell_1$ norm used in the early sparse methods---and covariance fitting \cite{yang2015gridless}. A main difficulty of applying these gridless sparse methods underlies in how to solve the nonlinearity problem, which makes the resulting optimization problems nonconvex with respect to the unknown frequencies. To do so, the key is to apply the Vandermonde decomposition of Toeplitz matrices to cast these optimization problems as semidefinite programs (SDP), in which the frequencies are encoded in a PSD Toeplitz matrix, as $\m{T}$ in \eqref{eq:vanderdecp1}. Once the SDP is solved, the frequencies are finally retrieved from the Vandermonde decomposition of the solved Toeplitz matrix. Note that the Vandermonde decomposition result has also been generalized to high dimensions and used for multidimensional frequency estimation \cite{yang2016vandermonde}.

Notice that the frequencies in the Vandermonde decomposition in \eqref{eq:vanderdecp1} may take any value in the normalized band $\mbra{0,1}$ (or the unit circle), in which 0 and 1 are identified. This paper is motivated by various practical applications in which the (normalized) frequencies can be known {\em a priori} to lie in certain frequency bands. For example, when a signal is oversampled by a factor, the frequencies will lie in a band narrowed by the same factor. Due to the path loss effect, the maximum value of the range/delay, which can be interpreted as a frequency parameter, of a detectable aircraft can be estimated in advance. Similarly, the maximum Doppler frequency can be obtained if the aircraft's characteristic speed can be known. In underwater channel estimation, the frequency parameters of interest can reside in a known small interval \cite{beygi2015multi}. Similar prior knowledge might also be available given weather observations \cite{doviak1993doppler}. Therefore, it would be interesting to exploit such prior knowledge in gridless sparse methods for frequency estimation, and by doing so, the estimation accuracy is expected to improve.

The important role of the Vandermonde decomposition in gridless sparse methods encourages us to incorporate the prior interval knowledge into the decomposition. In other words, we ask the following question: {\em Can the frequencies in the Vandermonde decomposition of the Toeplitz matrix $\m{T}$, as in \eqref{eq:vanderdecp1}, be restricted to lie in a given interval $\cI\subset\mbra{0,1}$, instead of the entire domain $\mbra{0,1}$, under explicit conditions on $\m{T}$?} In fact, we also want the conditions to be convex due to our interest in optimization problems. The resulting decomposition is referred to as frequency-selective (FS) Vandermonde decomposition. The question asked above is challenging since, by \eqref{eq:vanderdecp1}, $\m{T}$ is a highly nonlinear function of the frequencies and it is unclear how to link $\m{T}$ to a frequency interval $\cI$.

It is interesting to note that similar questions have been investigated in a class of moment problems known as truncated $K$-moment problems, a.k.a.~truncated moment problems on a semialgebraic set $K$, instead of on an entire domain \cite{schmudgen1991thek}. When $K$ is in the real or the complex domain, solutions to these problems have been successfully obtained \cite{curto2000truncated,lasserre2009moments}. To the best of our knowledge, however, the problem is still open when $K$ is defined on the unit circle $\mbra{0,1}$, which is known as the truncated trigonometric $K$-moment problem. In this paper, we show that the study of the FS Vandermonde decomposition can provide a solution to this open problem.

In this paper, an affirmative answer is provided to the question asked above. Concretely, it is shown that a PSD Toeplitz matrix $\m{T}$ admits an FS Vandermonde decomposition on a given interval if and only if $\m{T}$ satisfies another linear matrix inequality (LMI). Interestingly, this FS Vandermonde decomposition result is linked by duality to the positive real lemma (PRL) for trigonometric polynomials \cite{dumitrescu2007positive}. The usefulness of the new result is also demonstrated. In the theory of moments, it provides a solution to the truncated trigonometric $K$-moment problem. For frequency estimation with prior interval knowledge, it is used to derive a primal SDP formulation for the atomic norm exploiting the prior knowledge. Numerical examples are also provided.

\subsection{Related Work}

This paper extends our conference paper \cite{yang2016vandermonde_ccc} in which the FS Vandermonde decomposition of Toeplitz matrices was studied. In addition to this, we show in this paper the connection between the FS Vandermonde decomposition and the PRL for trigonometric polynomials. Its applications to the moment theory and frequency estimation are also studied in more detail.

The problem of frequency estimation with restriction on the frequency band was studied in \cite{mishra2015spectral,yang2016weighted, chao2016extensions}. In \cite{mishra2015spectral}, an FS atomic norm formulation (or constrained atomic norm in the language of \cite{mishra2015spectral}) was proposed and a dual SDP formulation was presented by applying the theory of positive trigonometric polynomials. In contrast to this, we show in this paper that a primal SDP formulation of the FS atomic norm can be obtained by applying the new FS Vandermonde decomposition. In \cite{yang2016weighted}, the interval prior was interpreted as a prior distribution of the frequencies and a weighted atomic norm approach was then devised that is an approximate but faster implementation of the FS atomic norm. Although the paper \cite{chao2016extensions} does not provide or imply the FS Vandermonde decomposition result, it obtained independently a primal SDP formulation of the FS atomic norm based on a different technique.

The paper \cite{de2015exact} studied the super-resolution problem on semialgebraic sets in the real domain and provided an SDP formulation of the resulting atomic norm. To do so, the key is to apply the moment theory on semialgebraic sets in the real domain (a.k.a. the truncated $K$-moment problem in the real domain). In contrast to this, we provide a first solution to the truncated trigonometric $K$-moment problem and then apply this result to study super-resolution on semi-algebraic sets on the unit circle.


\subsection{Notations}



Notations used in this paper are as follows. $\bR$ and $\bC$ denote the set of real and complex numbers, respectively. $\bT\coloneqq\mbra{0,1}$ denotes the unit circle, in which 0 and 1 are identified. Boldface letters are reserved for vectors and matrices. $\abs{\cdot}$ denotes the amplitude of a scalar or the cardinality of a set. $\onen{\cdot}$, $\twon{\cdot}$ and $\frobn{\cdot}$ denote the
$\ell_1$, $\ell_2$ and Frobenius norms respectively. $\m{A}^T$ and $\m{A}^H$ are the matrix transpose and conjugate transpose of $\m{A}$ respectively. $\rank\sbra{\m{A}}$ denotes the rank and $\tr\sbra{\m{A}}$ is the trace. For PSD matrices $\m{A}$ and $\m{B}$, $\m{A}\geq\m{B}$ means that $\m{A}-\m{B}$ is PSD. $\Re$ and $\Im$ return the real and the imaginary parts of a complex argument respectively.

A Hermitian trigonometric polynomial of degree one is defined as:
\equ{g(z) = r_1 z^{-1} + r_0 + r_{-1} z, \quad r_{-1}=\overline{r}_1,\quad r_0\in\bR, \label{eq:gz}}
where $z$ is a complex argument and $\overline{\cdot}$ denotes the complex conjugate operator. When $\m{z}$ is on the unit circle, i.e., when $z = e^{i2\pi f}$, $f\in\bT$, we write without ambiguity $g(f) \coloneqq g\sbra{e^{i2\pi f}}$. It follows that
\equ{g(f) = r_1e^{-i2\pi f} + r_0 + \overline{r}_1 e^{i2\pi f} = r_0 + 2\Re\lbra{r_1e^{-i2\pi f}}, }
and $g(f)$ is real on $\bT$.

An $N\times N$ Toeplitz matrix $\m{T}\coloneqq \m{T}\sbra{\m{t}}\coloneqq \m{T}\sbra{N,\m{t}}$ is formed by using a complex sequence $\m{t} = \mbra{t_j}$, $j=1-N,\dots,N-1$ and defined by $T_{mn} = t_{n-m}$, $0\leq m,n\leq N-1$. Given $\m{t}$ and a degree-1 trigonometric polynomial $g$ as defined in \eqref{eq:gz}, an $(N-1)\times (N-1)$ Toeplitz matrix $\m{T}_{g}\coloneqq \m{T}_{g}\sbra{\m{t}}\coloneqq \m{T}_{g}\sbra{N,\m{t}}$ is defined by
\equ{\mbra{T_g}_{mn} = r_1 t_{n-m+1}+ r_0t_{n-m} + r_{-1}t_{n-m-1}, \label{eq:Tg}}
$0\leq m,n\leq N-2$.
Also, let $\m{a}\sbra{f} \coloneqq \m{a}\sbra{N,f}\coloneqq \mbra{1,e^{i2\pi f}, \dots, e^{i2\pi (N-1)f}}^T$ denote a size-$N$ discrete complex sinusoid with frequency $f\in\bT$.

\subsection{Paper Organization}
The rest of the paper is organized as follows. Section \ref{sec:standardVD} introduces the standard Vandermonde decomposition of Toeplitz matrices. Section \ref{sec:VDint} presents the new FS Vandermonde decomposition. Section \ref{sec:duality} shows connections between the new result and the theory of trigonometric polynomials. Section \ref{sec:moment} illustrates its application in the theory of moments. Section \ref{sec:application} turns to the application in line spectral estimation with prior knowledge. Section \ref{sec:conclusion} concludes this paper.

\section{Vandermonde Decomposition of Toeplitz Matrices} \label{sec:standardVD}

The standard Vandermonde decomposition theorem of Toeplitz matrices \cite{caratheodory1911zusammenhang,stoica2005spectral} is summarized in this section. Although its proof can be found in, e.g., \cite{stoica2005spectral}, a new proof, inspired by \cite{gurvits2002largest}, is provided here which will form the basis of the proof of the FS Vandermonde decomposition given in Section \ref{sec:VDint}.

\begin{thm} A Toeplitz matrix $\m{T}\in\bC^{N\times N}$ admits the following $r$-atomic, $r=\rank\sbra{\m{T}}$, Vandermonde decomposition:
\equ{\m{T} = \sum_{k=1}^r p_k \m{a}\sbra{f_k}\m{a}^H\sbra{f_k}, \label{eq:VD}}
where $f_k\in\bT$, $k=1,\dots,r$ are distinct and $p_k>0$, if and only if $\m{T} \geq \m{0}$. Moreover, the decomposition is unique if $\m{T}$ is rank-deficient. \label{thm:VD}
\end{thm}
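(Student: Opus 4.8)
The plan is to prove the two directions of the ``if and only if'' separately, with the substantive content lying entirely in the ``if'' direction, since the ``only if'' direction is immediate: if $\m{T}=\sum_{k=1}^r p_k\m{a}(f_k)\m{a}^H(f_k)$ with $p_k>0$, then $\m{T}$ is a nonnegative combination of rank-one PSD matrices, hence $\m{T}\geq\m{0}$, and its rank is at most $r$; moreover the Vandermonde structure of the $\m{a}(f_k)$ (distinct $f_k$) forces the rank to be exactly $r$ when $r\leq N$, so the decomposition is indeed $r$-atomic with $r=\rank(\m{T})$. I would dispatch this in a sentence or two.

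For the ``if'' direction, assume $\m{T}\geq\m{0}$ with $r=\rank(\m{T})$, and follow the approach the text attributes to Gurvits \cite{gurvits2002largest}: among all atomic decompositions of $\m{T}$ (writing $\m{T}$ as a finite nonnegative combination of rank-one matrices $\m{a}(f)\m{a}^H(f)$, which exist because any PSD matrix is such a combination via its eigendecomposition --- though the eigenvectors need not be Vandermonde, one first needs a decomposition into \emph{Vandermonde} rank-ones, which I obtain by a separate argument below), choose one with the fewest atoms, say $m$ atoms at frequencies $f_1,\dots,f_m$ with weights $p_k>0$. The goal is to show $m=r$. One shows $m\leq r$ (hence $=r$, since clearly $m\geq \rank(\m{T})=r$) by a minimality/linear-dependence argument: if $m>r$, then the vectors $\m{a}(f_1)\m{a}^H(f_1),\dots$ viewed appropriately, or more precisely the Vandermonde vectors together with the constraint that $\m{T}$ has rank $r<m$, must satisfy a nontrivial linear relation, and one perturbs the weights $p_k\mapsto p_k+t\,\delta_k$ along the null direction, increasing $t$ until some weight hits zero, producing a decomposition with fewer atoms --- contradiction. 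The cleanest route to the existence of \emph{some} finite Vandermonde decomposition is to approximate: write $\m{T}$ via its first column as a truncated moment sequence and use that the PSD cone of Toeplitz matrices is the closure of the conic hull of $\{\m{a}(f)\m{a}^H(f):f\in\bT\}$, or alternatively invoke a compactness argument (the set of normalized PSD Toeplitz matrices is compact and its extreme points are the rank-one Vandermonde matrices by Carath\'eodory's theorem on convex hulls).

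For uniqueness when $r<N$: suppose $\m{T}=\sum_{k=1}^r p_k\m{a}(f_k)\m{a}^H(f_k) = \sum_{k=1}^{r}q_k\m{a}(g_k)\m{a}^H(g_k)$ are two $r$-atomic decompositions. The range of $\m{T}$ is the $r$-dimensional span of $\{\m{a}(f_k)\}$ and also of $\{\m{a}(g_k)\}$; its orthogonal complement in $\bC^N$ has dimension $N-r\geq 1$ and is annihilated by every $\m{a}(f_k)$ and every $\m{a}(g_k)$. A nonzero vector $\m{u}$ in this complement defines a polynomial $u(z)=\sum_{j=0}^{N-1}\overline{u}_j z^j$ (a nonzero polynomial of degree $\leq N-1$) that vanishes at $e^{i2\pi f_k}$ and at $e^{i2\pi g_k}$; thus all the $f_k$ and $g_k$ lie among its at most $N-1$ roots. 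Since $\{\m{a}(f_k)\}$ spans the range, no smaller set works, so both frequency sets equal the set of unit-circle roots of the minimal-degree annihilator, forcing $\{f_k\}=\{g_k\}$; then $\{\m{a}(f_k)\}$ being linearly independent forces $p_k=q_k$. I expect the main obstacle to be pinning down the existence of an \emph{honest} (finite, exact, Vandermonde) decomposition to which the minimality argument can be applied --- the conceptually clean statement is that the PSD Toeplitz cone equals the conic hull (not merely the closure) of rank-one Vandermonde matrices, and verifying this without circularity (e.g., via the boundary structure: a rank-deficient PSD Toeplitz matrix lies on the boundary and hence in the conic hull of lower-rank ones, with a flat-extension/induction on $N$, or via the Hamburger/trigonometric moment realization) is the step requiring care; everything else is linear algebra.
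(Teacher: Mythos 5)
There is a genuine gap, and it sits exactly where you flagged it: the existence of \emph{some} exact finite Vandermonde decomposition of a PSD Toeplitz matrix is the heart of the Carath\'{e}odory--Fej\'{e}r theorem, and your proposal never actually establishes it. The two routes you offer do not close it: (i) asserting that the extreme points of the normalized PSD Toeplitz set are the rank-one matrices $\m{a}(f)\m{a}^H(f)$ is equivalent in strength to the theorem itself, so invoking it is circular; (ii) the ``closure of the conic hull'' route needs a proof that the PSD Toeplitz cone is contained in that closure, which is itself nontrivial (the standard argument goes through duality with trigonometric polynomials nonnegative on $\bT$ and Fej\'{e}r--Riesz), and you supply neither this nor the compactness/Carath\'{e}odory bookkeeping needed to pass from the closure to an exact representation. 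Moreover, even granting the conic-hull equality, your pruning/perturbation step does not deliver the claimed atom count in the full-rank case: any $m\leq 2N-1$ atoms with distinct frequencies correspond to linearly independent elements of the $(2N-1)$-dimensional real space of Hermitian Toeplitz matrices, so for $N<m\leq 2N-1$ there is no null direction to perturb along, and you cannot rule out a priori that some full-rank $\m{T}$ admits only decompositions with more than $N$ atoms. (For rank-deficient $\m{T}$ the atom count takes care of itself: any exact decomposition with positive weights and distinct frequencies of a matrix of rank $r\leq N-1$ automatically has exactly $r$ atoms, since $N$ or more distinct atoms would force full rank.)

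What closes the gap in the paper is a short constructive argument you should adopt or reconstruct: factor $\m{T}=\m{V}\m{V}^H$ with $\m{V}\in\bC^{N\times r}$, observe from the Toeplitz structure that the top and bottom $(N-1)\times r$ blocks satisfy $\m{V}_U\m{V}_U^H=\m{V}_L\m{V}_L^H$, deduce (via \cite[Theorem 7.3.11]{horn2012matrix}) a unitary $\m{U}$ with $\m{V}_L=\m{V}_U\m{U}$, and diagonalize the unitary $\m{U}$: its eigenvalues $e^{i2\pi f_k}$ lie on the unit circle and give $t_j=\sum_k p_k e^{-i2\pi jf_k}$ directly, i.e.\ an $r$-atomic decomposition when $r\leq N-1$. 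The full-rank case is then reduced to this one by peeling off a single atom at an arbitrary $f_N$ with weight $p_N=\sbra{\m{a}^H\sbra{f_N}\m{T}^{-1}\m{a}\sbra{f_N}}^{-1}$, a Schur-complement choice that keeps the remainder PSD of rank $N-1$. Your ``only if'' direction and your uniqueness argument (annihilating polynomials of degree at most $N-1$, or equivalently linear independence of any $N$ distinct atoms) are essentially sound and match the paper, though the phrase ``both frequency sets equal the set of unit-circle roots of the minimal-degree annihilator'' should be replaced by the cleaner observation that the $(N-r)$-dimensional annihilator space consists exactly of the polynomials vanishing at the $f_k$, whose common unit-circle zeros are precisely $\lbra{f_k}$.
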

\begin{proof} Suppose that $\m{T}$ can be written as in \eqref{eq:VD}, where $p_k>0$, it is evident that $\m{T}$ is PSD. This completes the `only if' part. We next show the `if' part. To do so, we start with the case of $r=\rank\sbra{\m{T}}\leq N-1$. Since $\m{T}\geq \m{0}$, there exists $\m{V} = \mbra{\m{v}_1^T,\dots,\m{v}_{N}^T}^T\in\bC^{N\times r}$ satisfying $\m{T} = \m{V}\m{V}^H$, where $\m{v}_j\in\bC^{1\times r}$, $j=1,\dots,N$. Let $\m{V}_U = \mbra{\m{v}_1^T,\dots, \m{v}_{N-1}^T}^T$ and $\m{V}_L = \mbra{\m{v}_2^T,\dots, \m{v}_{N}^T}^T$. By the structure of $\m{T}$, we have that $\m{V}_U\m{V}_U^H = \m{V}_L\m{V}_L^H$. By \cite[Theorem 7.3.11]{horn2012matrix}, there exists an $r\times r$ unitary matrix $\m{U}$ satisfying $\m{V}_L = \m{V}_U\m{U}$. It follows that $\m{v}_j = \m{v}_1\m{U}^{j-1}$, $j=2,\dots,N$ and therefore,
\equ{t_j = \m{v}_1 \m{U}^{-j}\m{v}_1^H, \quad j=1-N,\dots,N-1. \label{eq:tjinvU}}
Note that $\m{U}$ has the following eigen-decomposition:
\equ{\m{U} = \widetilde{\m{U}} \diag\sbra{z_1,\dots,z_r} \widetilde{\m{U}}^H, \label{eqw:UinUz}}
where $\widetilde{\m{U}}$ is also an $r\times r$ unitary matrix and $z_k = e^{i2\pi f_k}$ with $f_k\in\bT$, $k=1,\dots,r$. Insert \eqref{eqw:UinUz} into \eqref{eq:tjinvU} and let $p_k = \abs{\m{v}_1\widetilde{\m{u}}_k}^2>0$, $k=1,\dots,r$, where $\widetilde{\m{u}}_k$ denotes the $k$th column of $\widetilde{\m{U}}$. Then we have that
\equ{t_j = \sum_{k=1}^r p_k e^{-i2\pi j f_k}.}
Using the identity above, $\m{T}$ can be written as in \eqref{eq:VD}. It is evident that $f_k$, $k=1,\dots,r$ are distinct since otherwise, $\rank\sbra{\m{T}} < r$, which cannot be true.

We now consider the case of $r=N$, in which $\m{T}$ is positive definite. To obtain a decomposition as in \eqref{eq:VD}, we choose arbitrarily $f_N \in\bT$ and let $p_N = \sbra{\m{a}^H\sbra{f_N} \m{T}^{-1} \m{a}\sbra{f_N}}^{-1}>0$. After that, we define a new sequence $\m{t}'=\mbra{t'_j}, \;\abs{j}\leq N-1$ as:
\equ{t'_j = t_j - p_N e^{-i2\pi j f_N}. \label{eq:t1j}}
It follows that
\equ{\m{T}\sbra{\m{t}'}
= \m{T} - p_N\m{a}\sbra{f_N}\m{a}^H\sbra{f_N}. \label{eq:Tu1}}
By the choice of $p_N$, the matrix
\equ{\begin{bmatrix} p_N^{-1} & \m{a}^H\sbra{f_N} \\ \m{a}\sbra{f_N} & \m{T} \end{bmatrix} = \begin{bmatrix} \m{a}^H\sbra{f_N}\m{T}^{-\frac{1}{2}} \\ \m{T}^{\frac{1}{2}} \end{bmatrix}\begin{bmatrix} \m{a}^H\sbra{f_N}\m{T}^{-\frac{1}{2}} \\ \m{T}^{\frac{1}{2}} \end{bmatrix}^H \notag}
is PSD and rank-deficient.
Notice that $\m{T}\sbra{\m{t}'}$ is the Schur complement of $\m{T}$ in the above matrix, and therefore
\equ{\m{T}\sbra{\m{t}'}\geq \m{0}. \label{eq:Tu1psd}}
Moreover, it holds that
\equ{\rank\sbra{\m{T}\sbra{\m{t}'}}<N \label{eq:Tu1rankd}}
since, otherwise, $\begin{bmatrix} p_N^{-1} & \m{a}^H\sbra{f_N} \\ \m{a}\sbra{f_N} & \m{T} \end{bmatrix}$ has full rank.
Combining \eqref{eq:Tu1rankd} and
\equ{\rank\sbra{\m{T}\sbra{\m{t}'}} \geq \rank\sbra{\m{T}} - \rank\sbra{p_N\m{a}\sbra{f_N}\m{a}^H\sbra{f_N}} = N-1 \notag}
results in
\equ{\rank\sbra{\m{T}\sbra{\m{t}'}}= N-1.\label{eq:Tu1rank}}
Following from \eqref{eq:Tu1psd}, \eqref{eq:Tu1rank} and the result in the case of $r\leq N-1$ that we just proved, $\m{T}\sbra{\m{t}'}$ admits a Vandermonde decomposition as in \eqref{eq:VD} with $r=N-1$. It then follows from \eqref{eq:Tu1} that $\m{T}$ admits an $N$-atomic Vandermonde decomposition.

We finally show the uniqueness in the case of $r\leq N-1$. Write \eqref{eq:VD} in matrix form as $\m{T}=\m{A}\sbra{\m{f}} \m{P}\m{A}^H\sbra{\m{f}}$, where $\m{P} = \diag\sbra{p_1,\dots,p_r}$ and $\m{A}\sbra{\m{f}} = \mbra{\m{a}\sbra{f_1},\dots, \m{a}\sbra{f_r}}$. Suppose that $\m{T}$ has another decomposition: $\m{T} = \m{A}\sbra{\m{f}'} \m{P}'\m{A}^H\sbra{\m{f}'}$, in which, similarly, $f'_j\in\bT$, $j=1,\dots,r$ are distinct and $p'_j>0$. It is evident that
\equ{\m{A}\sbra{\m{f}'} \m{P}'\m{A}^H\sbra{\m{f}'}=\m{A}\sbra{\m{f}} \m{P}\m{A}^H\sbra{\m{f}}.}
Therefore, there exists an $r\times r$ unitary matrix $\m{U}'$ satifying $\m{A}\sbra{\m{f}'}\m{P}'^{\frac{1}{2}} = \m{A}\sbra{\m{f}} \m{P}^{\frac{1}{2}} \m{U}'$. It follows that
\equ{\m{A}\sbra{\m{f}'} = \m{A}\sbra{\m{f}} \m{P}^{\frac{1}{2}} \m{U}' \m{P}'^{-\frac{1}{2}}.}
This means that for every $j = 1,\dots,r$, $\m{a}\sbra{f'_j}$ lies in the range space spanned by $\lbra{\m{a}\sbra{f_k}}_{k=1}^r$. By the fact that $r\leq N-1$ and that any $N$ atoms $\m{a}\sbra{f_k}$ with distinct $f_k$'s are linearly independent, we have that $f'_j\in\lbra{f_k}_{k=1}^r$ and thus the two sets $\lbra{f'_j}_{j=1}^r$ and $\lbra{f_k}_{k=1}^r$ are identical. It follows that the two decompositions of $\m{T}$ are identical.
\end{proof}

We next discuss how to obtain the Vandermonde decomposition, to be specific, how to solve for $f_k$ and $p_k$ in \eqref{eq:VD}. In fact, a computational approach can be provided based on the proof of Theorem \ref{thm:VD}. In the case of $r\leq N-1$, using Cholesky decomposition, we can compute $\m{V}\in\bC^{N\times r}$ satisfying $\m{T} = \m{V}\m{V}^H$. By the arguments of the proof, it is easy to show the following equation:
\equ{\sbra{\m{V}_U^H\m{V}_L - z_k \m{V}_U^H\m{V}_U}\widetilde{\m{u}}_k = 0,}
from which $z_k$ and $\widetilde{\m{u}}_k$, $k=1,\dots,r$ can be computed as the eigenvalues and eigenvectors of the matrix pencil $\sbra{\m{V}_U^H\m{V}_L, \m{V}_U^H\m{V}_U}$. Finally, the parameters are obtained as: $f_k = \frac{1}{2\pi}\Im \ln z_k\in\bT$ and $p_k = \abs{\m{v}_1\widetilde{\m{u}}_k}^2$, $k=1,\dots,r$, where $\m{v}_1$ is the first row of $\m{V}$. In the case of $r= N$, $f_N\in\bT$ can be chosen arbitrarily first, and the rest can be done following from the proof.

\section{FS Vandermonde Decomposition of Toeplitz Matrices} \label{sec:VDint}

We present the FS Vandermonde decomposition result in this section. To encode the interval information into the Vandermonde decomposition, we first construct a trigonometric polynomial that is nonnegative on the interval $\cI$ and negative on its complement. We first clarify some notations. For $f_L\neq f_H\in\bT$, if $f_L< f_H$, then $\cI=\mbra{f_L, f_H}$ denotes a closed interval as usual. Otherwise, we define $\cI=\mbra{f_L, f_H}\coloneqq \bT\backslash \sbra{f_H, f_L}$. By this definition, we can conveniently deal with the case in which $0$ (or $1$) is an interior point of $\cI$. The trigonometric polynomial, $g$, is defined as:
\equ{g(z) = \frac{1}{z\sqrt{z_Lz_H}}\sbra{z-z_L}\sbra{z-z_H}\sgn\sbra{f_H-f_L}, \label{eq:g1}}
where $z_L \coloneqq e^{i2\pi f_L}$, $z_H \coloneqq e^{i2\pi f_H}$ and $\sgn\sbra{\cdot}$ is the sign function. With simple derivations, we have
\equ{g(z)= r_1 z^{-1} + r_0 + \overline{r}_1 z, \label{eq:gz1}}
where
{\lentwo\equa{r_0
&=& - \frac{z_L+z_H}{\sqrt{z_Lz_H}} \sgn\sbra{f_H-f_L} \notag \\
&=& -2\cos\mbra{\pi\sbra{f_H-f_L}}\sgn\sbra{f_H-f_L}, \label{eq:r0}\\ r_1
&=& \sqrt{z_Lz_H}\sgn\sbra{f_H-f_L} \notag\\
&=&  e^{i\pi\sbra{f_L+f_H}} \sgn\sbra{f_H-f_L}.\label{eq:r1}
}}It is evident that $g(z)$ is a Hermitian trigonometric polynomial that is real-valued on $\bT$. By the way that $g(z)$ is constructed, we know that $g(z)$ has two single roots $z_L$ and $z_H$, and equivalently, $g(f)$ has two single roots $f_L$ and $f_H$. Therefore, $g(f)$ flips its sign around $f_L$ and $f_H$. Two possibilities are: $g(f)$ is positive on $\sbra{f_L, f_H}$ and negative on $\sbra{f_H, f_L}$, or negative on $\sbra{f_L, f_H}$ and positive on $\sbra{f_H, f_L}$. To determine which one is true, we check the value at $f=\frac{1}{2}\sbra{f_L+f_H}$:
\equ{\begin{split}
&g\sbra{\frac{1}{2}\sbra{f_L+f_H}} \\
&= r_0 + 2\Re \sbra{r_1 e^{-i\pi \sbra{f_L+f_H} }} \\
&= \lbra{2-2\cos\mbra{\pi\sbra{f_L-f_H}}} \sgn\sbra{f_H-f_L}. \end{split}}
Consequently, the sign of $g$ at $f=\frac{1}{2}\sbra{f_L+f_H}$ is identical to that of $f_H-f_L$, meaning that $g(f)$ is always positive on $\sbra{f_L, f_H}$ and negative on $\sbra{f_H, f_L}$ whenever $f_L<f_H$ or $f_L>f_H$.

Now we are ready to present the FS Vandermonde decomposition result, which is summarized in the following theorem.\footnote{Part of the FS Vandermonde decomposition result was extended to a general form in the recent preprint \cite{chao2016semidefinite}, which appeared online after our conference paper \cite{yang2016vandermonde_ccc} was accepted.}

\begin{thm} Given $\cI\subset\bT$, a Toeplitz matrix $\m{T}\in\bC^{N\times N}$ admits an FS Vandermonde decomposition, as in \eqref{eq:VD}, with $f_k\in\cI$, if and only if
{\lentwo\equa{\m{T}
&\geq& \m{0}, \label{eq:Tupsd}\\ \m{T}_g
&\geq& \m{0}, \label{eq:Tunew}
}}where $g$ is defined by \eqref{eq:gz1}-\eqref{eq:r1} and $\m{T}_g$ by \eqref{eq:Tg}.
Moreover, the decomposition is unique if either $\m{T}$ or $\m{T}_g$ is rank-deficient. \label{thm:BCVD}
\end{thm}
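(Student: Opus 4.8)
\emph{Proof proposal.}\ The plan is to follow the architecture of the proof of Theorem~\ref{thm:VD}, with the extra inequality \eqref{eq:Tunew} used solely to confine the recovered frequencies to $\cI$. The workhorse is the elementary identity read off from \eqref{eq:Tg}: if $t_j=e^{-i2\pi jf}$, so that $\m{T}\sbra{\m{t}}=\m{a}\sbra{N,f}\m{a}^H\sbra{N,f}$, then factoring $e^{-i2\pi\ell f}$ out of $\mbra{T_g}_{mn}$ gives $\m{T}_g=g(f)\,\m{a}\sbra{N-1,f}\m{a}^H\sbra{N-1,f}$; and since $\m{t}\mapsto\m{T}_g\sbra{\m{t}}$ is linear, any Vandermonde decomposition $\m{T}=\sum_k p_k\m{a}\sbra{f_k}\m{a}^H\sbra{f_k}$ yields
\[
\m{T}_g=\sum_k p_k\,g\sbra{f_k}\,\m{a}\sbra{N-1,f_k}\m{a}^H\sbra{N-1,f_k}.
\]
The ``only if'' part is immediate from this displayed identity: when $f_k\in\cI$ we have $g\sbra{f_k}\geq0$ (recall $g>0$ on the interior of $\cI$ and $g\sbra{f_L}=g\sbra{f_H}=0$), so both \eqref{eq:Tupsd} and \eqref{eq:Tunew} hold.

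For the ``if'' part I would split on $r\coloneqq\rank\sbra{\m{T}}$. If $r\leq N-1$, Theorem~\ref{thm:VD} already furnishes a unique $r$-atomic Vandermonde decomposition of $\m{T}$; in the displayed identity the $r\leq N-1$ vectors $\m{a}\sbra{N-1,f_k}$ have distinct frequencies and length $N-1$, hence are linearly independent, and writing $\m{T}_g=\m{B}\,\diag\sbra{p_kg\sbra{f_k}}\,\m{B}^H$ with $\m{B}$ of full column rank, \eqref{eq:Tunew} forces $p_kg\sbra{f_k}\geq0$, i.e. $g\sbra{f_k}\geq0$, i.e. $f_k\in\cI$. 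If $r=N$, i.e. $\m{T}$ is positive definite, I would peel an atom precisely at $f_L$: put $p_N=\sbra{\m{a}^H\sbra{f_L}\m{T}^{-1}\m{a}\sbra{f_L}}^{-1}>0$ and $t'_j=t_j-p_Ne^{-i2\pi jf_L}$. The Schur-complement argument of Theorem~\ref{thm:VD} gives $\m{T}\sbra{\m{t}'}\geq\m{0}$ with $\rank\sbra{\m{T}\sbra{\m{t}'}}=N-1$, while the crucial point is that $g\sbra{f_L}=0$ makes $\m{T}_g\sbra{\m{t}'}=\m{T}_g\geq\m{0}$. Applying the first case to $\m{T}\sbra{\m{t}'}$ and re-attaching $p_N\m{a}\sbra{f_L}\m{a}^H\sbra{f_L}$ produces the desired FS decomposition, which is $N$-atomic since $f_L$ cannot coincide with any of the $N-1$ recovered frequencies without contradicting $\rank\sbra{\m{T}}=N$.

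For uniqueness, if $\m{T}$ is rank-deficient the statement is inherited verbatim from Theorem~\ref{thm:VD}. Suppose then that $\m{T}$ is positive definite but $\m{T}_g$ is rank-deficient. For \emph{any} FS decomposition of $\m{T}$, which is necessarily $N$-atomic, dropping the zero terms in the displayed identity gives $\m{T}_g=\sum_{k:\,g\sbra{f_k}>0}p_kg\sbra{f_k}\,\m{a}\sbra{N-1,f_k}\m{a}^H\sbra{N-1,f_k}$, whose rank equals that of the $m$ Vandermonde vectors $\m{a}\sbra{N-1,f_k}$ with $g\sbra{f_k}>0$; since these have distinct frequencies in $\bC^{N-1}$, that rank is $\min\sbra{m,N-1}$, where $m$ counts the ``interior'' atoms. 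Rank-deficiency thus forces $m\leq N-2$, hence at least two of the $f_k$ lie in the two-point set $\lbra{f_L,f_H}$, hence—by distinctness—exactly one atom at $f_L$, one at $f_H$, and $N-2$ in the interior, with $\rank\sbra{\m{T}_g}=N-2$. It remains to see that the two endpoint weights are forced: peeling $f_L$ must use the maximal weight $p^{(L)}\coloneqq\sbra{\m{a}^H\sbra{f_L}\m{T}^{-1}\m{a}\sbra{f_L}}^{-1}$, for otherwise $\m{T}-p\,\m{a}\sbra{f_L}\m{a}^H\sbra{f_L}$ would stay of rank $N$ and the required two-step rank drop would be impossible; the residual $\m{T}-p^{(L)}\m{a}\sbra{f_L}\m{a}^H\sbra{f_L}$ is then PSD of rank $N-1$ and contains $\m{a}\sbra{f_H}$ in its range, so the same Schur-complement computation (now with the Moore--Penrose inverse) forces the $f_H$-weight; and the leftover PSD Toeplitz matrix of rank $N-2<N$ has, by Theorem~\ref{thm:VD}, a unique Vandermonde decomposition, pinning down the remaining atoms.

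The step I expect to be the main obstacle is this last (mixed-rank) uniqueness argument—establishing that $\rank\sbra{\m{T}_g}$ exactly counts the interior atoms of an arbitrary FS decomposition, and that the successive endpoint peelings are rigid. A small but pivotal observation underlying both the existence proof for positive definite $\m{T}$ and this uniqueness proof is that peeling an atom located at an endpoint of $\cI$ leaves $\m{T}_g$ unchanged, because $g$ vanishes there.
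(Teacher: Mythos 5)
Your proposal is correct and follows essentially the same route as the paper: the identity $\m{T}_g=\sum_k p_k g\sbra{f_k}\m{a}\sbra{N-1,f_k}\m{a}^H\sbra{N-1,f_k}$ for the rank-deficient case, and peeling an atom at $f_L$ (where $g$ vanishes, leaving $\m{T}_g$ unchanged) for the positive definite case. Your treatment of the mixed-rank uniqueness case ($\m{T}$ full rank, $\m{T}_g$ rank-deficient) via maximal-weight endpoint peeling is in fact more detailed than the paper's, which merely asserts that the $N-2$ interior atoms are then fixed.
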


\begin{proof} We first show the ``if'' part. Consider the case of $r\leq N-1$. It then follows from \eqref{eq:Tupsd} and Theorem \ref{thm:VD} that $\m{T}$ admits a unique Vandermonde decomposition as in \eqref{eq:VD}. So, it suffices to show $f_k\in \cI$, $k=1,\dots,r$ under the additional condition \eqref{eq:Tunew}. To do so, note by \eqref{eq:VD} that
\equ{t_{n-m} = T_{mn} = \sum_{k=1}^r p_k e^{i2\pi (m-n) f_k}. \label{eq:tlk}}
It immediately follows that
\equ{\begin{split}\mbra{T_g}_{mn}
&= \sum_{j=-1}^1 r_j t_{n-m+j} \\
&= \sum_{j=-1}^1 r_j \sum_{k=1}^r p_k e^{i2\pi (m-n-j) f_k}\\
&= \sum_{k=1}^r p_k e^{i2\pi (m-n) f_k} \sum_{j=-1}^1 r_j e^{-i2\pi j f_k}\\
&= \sum_{k=1}^r p_k g\sbra{f_k} e^{i2\pi (m-n) f_k}, \end{split} \label{eq:Tgkl}}
and hence
\equ{\begin{split}\m{T}_g
&= \sum_{k=1}^r p_k g\sbra{f_k} \m{a}\sbra{N-1,f_k}\m{a}^H\sbra{N-1,f_k}\\
&= \m{A}\sbra{N-1,\m{f}}\diag\sbra{p_1g\sbra{f_1},\dots, p_rg\sbra{f_r}} \m{A}^H\sbra{N-1,\m{f}}, \end{split} \label{eq:T1ueq}}
where $\m{A}\sbra{N-1,\m{f}} \coloneqq \mbra{\m{a}\sbra{N-1,f_1},\dots,\m{a}\sbra{N-1,f_r}}$ is an $\sbra{N-1}\times r$ Vandermonde matrix and $\diag\sbra{p_1g\sbra{f_1},\dots, p_rg\sbra{f_r}}$ denotes a diagonal matrix with $p_kg\sbra{f_k}$, $k=1,\dots,r$ on the diagonal. Note that $\m{A}\sbra{N-1,\m{f}}$ has full column rank since $r\leq N-1$. Using \eqref{eq:T1ueq} and \eqref{eq:Tunew}, we have that
\equ{\begin{split}
&\diag\sbra{p_1g\sbra{f_1},\dots, p_r g\sbra{f_r}} = \m{A}^{\dag}\sbra{N-1,\m{f}} \m{T}_g \m{A}^{\dag H}\sbra{N-1,\m{f}} \geq \m{0}, \end{split}}
where $\cdot^{\dag}$ denotes the matrix pseudo-inverse operator.
This means that $p_kg\sbra{f_k}\geq 0$, and since $p_k>0$, we have $g\sbra{f_k}\geq 0$, $k=1,\dots,r$. By the property of $g(f)$, finally, we have $f_k\in\cI$, $k=1,\dots,r$.

We next consider the case of $r=N$ in which $\m{T}$ is positive definite. Let $f_N = f_L$ and $p_N = \sbra{\m{a}^H\sbra{f_N} \m{T}^{-1} \m{a}\sbra{f_N}}^{-1}>0$. Similar to that in the proof of Theorem \ref{thm:VD}, we define a new sequence $\m{t}'=\mbra{t'_j}, \;\abs{j}\leq N-1$ as in \eqref{eq:t1j}, which therefore satisfies \eqref{eq:Tu1}, \eqref{eq:Tu1psd} and \eqref{eq:Tu1rank}. Moreover, we have
\equ{\begin{split}\mbra{T_g\sbra{\m{t}'}}_{mn}
&= \sum_{j=-1}^1 r_j t'_{n-m+j} \\
&= \mbra{T_g}_{mn} - p_N g(f_N) e^{i2\pi (m-n) f_N}, \end{split}}
and hence
\equ{\m{T}_g\sbra{\m{t}'} = \m{T}_g  - p_Ng\sbra{f_N} \m{a}\sbra{N-1,f_N}\m{a}^H\sbra{N-1,f_N}. \label{eq:Tgt1}}
By \eqref{eq:Tunew} and the fact that $g\sbra{f_N}=g\sbra{f_L}=0$, we have
\equ{\m{T}_g\sbra{\m{t}'} = \m{T}_g \geq \m{0}. \label{eq:Tu1newpsd}}
Now consider $\m{T}\sbra{\m{t}'}$ that satisfies \eqref{eq:Tu1psd}, \eqref{eq:Tu1rank} and \eqref{eq:Tu1newpsd}. Following from the ``if'' part of Theorem \ref{thm:BCVD} in the case of $r\leq N-1$ that we just proved, $\m{T}\sbra{\m{t}'}$ admits a unique decomposition as in \eqref{eq:VD}, with $f_k\in\cI$, $k=1,\dots,r=N-1$. Therefore, it follows from \eqref{eq:Tu1} that
\equ{\m{T} = \m{T}\sbra{\m{t}'} + p_N\m{a}\sbra{f_N}\m{a}^H\sbra{f_N}}
has a decomposition as in \eqref{eq:VD}, with $f_k\in\cI$, $k=1,\dots,r=N$. So we complete the ``if'' part.

The ``only if'' part can be shown by similar arguments. In particular, given $\m{T}$ as in \eqref{eq:VD}, it is evident that \eqref{eq:Tupsd} holds. Moreover, \eqref{eq:Tunew} also holds, since we still have \eqref{eq:T1ueq}, in which $g\sbra{f_k}\geq 0$, $k=1,\dots,r$ by the property of $g$.

We finally shown the uniqueness under the additional condition that $\m{T}$ or $\m{T}_g$ is rank-deficient. When $\m{T}$ is rank-deficient, this is a direct consequence of Theorem \ref{thm:VD}. In the other case when $\m{T}$ has full rank and $\m{T}_g$ is rank-deficient, note first that there are at least $N$ distinct $f_k$'s in the FS Vandermonde decomposition of $\m{T}$, since, otherwise, $\m{T}$ loses rank. We now recall \eqref{eq:T1ueq}, in which $\m{A}\sbra{N-1,\m{f}}$ has full row rank and $g(f_k)\geq 0$. To guarantee that $\m{T}_g$ is rank-deficient, $g(f_k)\neq 0$ must hold for maximally $N-2$, $f_k$'s and the other $f_k$'s must be either $f_L$ or $f_H$. This means that the decomposition consists of exactly $N$ atoms and two of them are located at $f_L$ and $f_H$. Therefore, the other $N-2$ frequencies are fixed as well, and the FS Vandermonde decomposition is unique.
\end{proof}

The FS Vandermonde decomposition can be computed similarly as the standard Vandermonde decomposition provided that the conditions of Theorem \ref{thm:BCVD} are satisfied. More concretely, in the case when $\m{T}$ is rank-deficient, it admits a unique Vandermonde decomposition that can be computed as in Section \ref{sec:standardVD}. In the case when $\m{T}$ has full rank, an $N$-atomic decomposition can be computed following from the proof of Theorem \ref{thm:BCVD}, to be specific, fix $f_N=f_L$ first and compute the other parameters following the proof.

Finally, note that the FS Vandermonde decomposition result can be extended straightforwardly to the multiple frequency band case. Let $K=\bigcup_{l=1}^J \mbra{f_{Ll}, f_{Hl}}$, where $\mbra{f_{Ll}, f_{Hl}}\subset\bT$, $l=1,\dots, J\geq 2$ are disjoint. We have the following corollary of Theorem \ref{thm:BCVD}, the proof of which is straightforward and thus is omitted.
\begin{cor} Given $K = \bigcup_{l=1}^J \mbra{f_{Ll}, f_{Hl}}$, a Toeplitz matrix $\m{T}\in\bC^{N\times N}$ admits an FS Vandermonde decomposition, as in \eqref{eq:VD}, with $f_k\in K$, if and only if there exist sequences $\m{t}_l$, $l=1,\dots,J$ satisfying
{\lentwo\equa{\sum_{l=1}^J \m{t}_l
&=& \m{t}, \label{eq:sumTleqT}\\  \m{T}\sbra{\m{t}_l}
&\geq& \m{0}, \label{eq:Tupsdl}\\ \m{T}_{g_l}\sbra{\m{t}_l}
&\geq& \m{0}, \quad l=1,\dots,J, \label{eq:Tunewl}
}}where $g_l$, $l=1,\dots,J$ are $g$ defined with respect to $\mbra{f_{Ll}, f_{Hl}}$, respectively. \label{cor:BCVDM}
\end{cor}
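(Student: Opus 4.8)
The plan is to prove both implications by reducing to Theorem~\ref{thm:BCVD} one band at a time, using that both $\m{t}\mapsto\m{T}\sbra{\m{t}}$ and $\m{t}\mapsto\m{T}_{g}\sbra{\m{t}}$ are linear in the underlying sequence, so that a splitting $\m{t}=\sum_{l}\m{t}_l$ of the sequence induces the additive splitting $\m{T}=\sum_{l}\m{T}\sbra{\m{t}_l}$ of the Toeplitz matrix.

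For the ``if'' direction, suppose sequences $\m{t}_1,\dots,\m{t}_J$ satisfying \eqref{eq:sumTleqT}--\eqref{eq:Tunewl} are given, and fix $l$. By \eqref{eq:Tupsdl} and \eqref{eq:Tunewl} we have $\m{T}\sbra{\m{t}_l}\geq\m{0}$ and $\m{T}_{g_l}\sbra{\m{t}_l}\geq\m{0}$, so Theorem~\ref{thm:BCVD}, applied with the interval $\cI_l\coloneqq\mbra{f_{Ll},f_{Hl}}$ and the polynomial $g_l$, produces an FS Vandermonde decomposition $\m{T}\sbra{\m{t}_l}=\sum_{k}p_k^{(l)}\m{a}\sbra{f_k^{(l)}}\m{a}^H\sbra{f_k^{(l)}}$ with distinct $f_k^{(l)}\in\cI_l$ and $p_k^{(l)}>0$ (the sum being vacuous when $\m{t}_l=\m{0}$). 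Summing over $l$ and using \eqref{eq:sumTleqT} together with the linearity of $\m{t}\mapsto\m{T}\sbra{\m{t}}$ gives $\m{T}=\sum_{l=1}^J\sum_{k}p_k^{(l)}\m{a}\sbra{f_k^{(l)}}\m{a}^H\sbra{f_k^{(l)}}$; since the bands $\mbra{f_{Ll},f_{Hl}}$ are pairwise disjoint, all the frequencies appearing on the right are mutually distinct, so this is a decomposition of the form \eqref{eq:VD} with every frequency lying in $K=\bigcup_{l}\cI_l$.

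For the ``only if'' direction, suppose $\m{T}=\sum_{k=1}^r p_k\m{a}\sbra{f_k}\m{a}^H\sbra{f_k}$ with $p_k>0$ and $f_k\in K$. Since the intervals $\mbra{f_{Ll},f_{Hl}}$ are pairwise disjoint, each $f_k$ lies in exactly one of them, so the index sets $S_l\coloneqq\lbra{k:f_k\in\mbra{f_{Ll},f_{Hl}}}$ partition $\lbra{1,\dots,r}$. Define the sequences $\m{t}_l$ by $[\m{t}_l]_j\coloneqq\sum_{k\in S_l}p_k e^{-i2\pi j f_k}$, equivalently $\m{T}\sbra{\m{t}_l}=\sum_{k\in S_l}p_k\m{a}\sbra{f_k}\m{a}^H\sbra{f_k}$. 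Then $\sum_{l}\m{t}_l=\m{t}$, which is \eqref{eq:sumTleqT}; each $\m{T}\sbra{\m{t}_l}$ is a sum of PSD rank-one matrices, giving \eqref{eq:Tupsdl}; and since every atom of $\m{T}\sbra{\m{t}_l}$ lies in $\cI_l$, the ``only if'' part of Theorem~\ref{thm:BCVD} yields $\m{T}_{g_l}\sbra{\m{t}_l}\geq\m{0}$, which is \eqref{eq:Tunewl}. Hence the required sequences exist.

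Each step is an immediate application of Theorem~\ref{thm:BCVD} together with linearity of the Toeplitz maps, so I do not anticipate a genuine obstacle; the only point requiring care is the role of the disjointness of the bands, which is used twice---to keep the frequencies produced in the ``if'' direction mutually distinct across different bands, and to make the partition $\lbra{S_l}$ in the ``only if'' direction well defined.
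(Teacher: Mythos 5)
Your proof is correct, and it is precisely the straightforward band-by-band reduction to Theorem~\ref{thm:BCVD} (splitting $\m{t}$ by linearity in one direction, grouping atoms by the disjoint intervals in the other) that the paper has in mind when it omits the proof as trivial. No gaps: the disjointness of the bands is indeed the only point needing the care you give it.
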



\section{Duality} \label{sec:duality}
Using the FS Vandermonde decomposition result presented in the previous section, we can explicitly characterize the cone of Toeplitz matrices admitting such decompositions. Due to the interest in optimization problems, we naturally look at the dual cone, which, as we will see, enables us to link the FS Vandermonde decomposition to the theory of trigonometric polynomials, to be specific, the PRL given in \cite{davidson2002linear,alkire2002convex} (see also \cite{dumitrescu2007positive}).

For a sequence $\m{t}=\mbra{t_j}$, $\abs{j}\leq N-1$ with $t_{-j} = \overline{t}_j$, let $\m{t}_R = \mbra{\Re t_{N-1}, \dots, \Re t_{1}, \frac{\sqrt{2}}{2}t_0, \Im t_1, \dots, \Im t_{N-1}}^T\in\bR^{2N-1}$ be a representation of $\m{t}$ in the real domain, where the coefficient $\frac{\sqrt{2}}{2}$ for $t_0$ is chosen for convenience. It is obvious that all $N\times N$ Toeplitz matrices admitting an FS Vandermonde decomposition on a given interval $\cI\subset\bT$ form a cone that can be identified with
\equ{\begin{split}
&\cK_{\text{VDF}} \coloneqq \lbra{\m{t}_R:\; \m{T} = \sum_{k} p_k\m{a}\sbra{f_k}\m{a}^H\sbra{f_k},\; p_k\geq 0, f_k\in \cI}.\end{split} \label{eq:KVDF}}
Define
\equ{\begin{split}\cK_{\text{VDM}}
&\coloneqq \lbra{\m{t}_R:\; \m{T}\geq\m{0}, \; \m{T}_{g}\geq \m{0}}, \end{split} \label{eq:KVDM}}
where $g$ is defined in Theorem \ref{thm:BCVD}. A direct consequence of Theorem \ref{thm:BCVD} is that
\equ{\cK_{\text{VDF}} = \cK_{\text{VDM}}. \label{eq:VDconeeq} }

We next consider the dual cone of $\cK_{\text{VDF}}$ defined as \cite{boyd2004convex}
\equ{\cK_{\text{VDF}}^*\coloneqq \lbra{\m{\alpha}\in\bR^{2N-1}:\; \m{t}_R^T\m{\alpha}\geq 0 \text{ for any } \m{t}_R\in\cK_{\text{VDF}}}. \label{eq:defdualcone}}
Before proceeding to the main result of this section, we first introduce some notations. Let
\equ{\cK_{\text{PolF}}\coloneqq \lbra{\m{\gamma}_R:\; \sum_{j=1-N} ^{N-1} \gamma_j e^{i2\pi j f} \geq 0,\; f\in \cI} \label{KPolF}}
denote the cone of trigonometric polynomials of order $N-1$ and nonnegative on $\cI$, where $\m{\gamma}_R$ is similarly defined as $\m{t}_R$. Let also $\m{\Theta}_j$, $\abs{j}\leq N-1$ be an $N\times N$ elementary Toeplitz matrix with ones on its $j$th diagonal and zeros elsewhere. With respect to $\m{\Theta}_j$ and the trigonometric polynomial $g$ defined by \eqref{eq:gz1}-\eqref{eq:r1}, we define the $(N-1)\times (N-1)$ Toeplitz matrix $\m{\Theta}_{gj}$, like $\m{T}_g$ with respect to $\m{T}$. By definition, it is easy to verify that
\lentwo{\equa{\m{T}
&=& \sum_{j=1-N}^{N-1} \m{\Theta}_j t_j, \label{eq:TinTheta}\\ \m{T}_g
&=& \sum_{j=1-N}^{N-1} \m{\Theta}_{gj} t_j. \label{eq:TginThetag}
}}We also define the cone
\equ{\begin{split} \cK_{\text{PolM}}\coloneqq \Big\{\m{\gamma}_R:\;
& \gamma_{-j} = \tr\sbra{\m{\Theta}_j \m{Q}_0} + \tr\mbra{ \m{\Theta}_{gj} \m{Q}_1 }, \\
& \abs{j}\leq N-1, \\
& \m{Q}_0\in\bC^{N\times N}, \m{Q}_1\in\bC^{(N-1)\times (N-1)},\\
& \m{Q}_0\geq\m{0},\m{Q}_1\geq\m{0} \Big\}. \end{split} \label{KPolM}}
The main result of this section is given in the following theorem.

\begin{thm} We have the following identities:
 {\lentwo\equa{\cK_{\text{VDF}}^*
 &=& \cK_{\text{PolF}}, \label{eq:dualeq1} \\ \cK_{\text{PolM}}^*
 &=& \cK_{\text{VDM}}. \label{eq:dualeq2}
 }}Therefore, provided that $\cK_{\text{VDF}} = \cK_{\text{VDM}}$ we can conclude that $\cK_{\text{PolF}} = \cK_{\text{PolM}}$, and vice versa.
\label{thm:dualcone}
\end{thm}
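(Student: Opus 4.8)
The plan is to prove the two displayed identities \eqref{eq:dualeq1} and \eqref{eq:dualeq2} separately from standard conic duality, and then to read off the concluding equivalence by taking dual cones. Throughout, the vector $\m{\alpha}$ of \eqref{eq:defdualcone} and the vectors $\m{\gamma}_R$ of \eqref{KPolF} and \eqref{KPolM} are all regarded as elements of $\bR^{2N-1}$, so that all four cones live in a common ambient space and the claimed relations are genuine set equalities.

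\emph{Proof of $\cK_{\text{VDF}}^*=\cK_{\text{PolF}}$.} By \eqref{eq:KVDF}, $\cK_{\text{VDF}}$ is the conic hull of the continuum of rank-one atoms $\m{a}\sbra{f}\m{a}^H\sbra{f}$, $f\in\cI$; writing $\m{a}_R\sbra{f}$ for the real vector attached to this Toeplitz matrix, $\cK_{\text{VDF}}=\lbra{\sum_k p_k\m{a}_R\sbra{f_k}:\ p_k\geq 0,\ f_k\in\cI}$. Since a linear functional is nonnegative on a conic hull if and only if it is nonnegative on every generator, $\cK_{\text{VDF}}^*=\lbra{\m{\gamma}_R:\ \m{\gamma}_R^T\m{a}_R\sbra{f}\geq 0 \text{ for all }f\in\cI}$. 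Unwinding the definition of $\m{t}_R$ (in particular the factor $\tfrac{\sqrt 2}{2}$ on the zeroth coordinate) gives $\m{\gamma}_R^T\m{a}_R\sbra{f}=\tfrac12\sum_{j=1-N}^{N-1}\gamma_j e^{i2\pi j f}$ for every $f$, and this identifies $\cK_{\text{VDF}}^*$ with $\cK_{\text{PolF}}$ of \eqref{KPolF}. (The conic hull is in fact closed, since $\cI$ is compact, $f\mapsto\m{a}\sbra{f}\m{a}^H\sbra{f}$ is continuous, and no generator vanishes; this is used below.)

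\emph{Proof of $\cK_{\text{PolM}}^*=\cK_{\text{VDM}}$.} By \eqref{KPolM}, $\cK_{\text{PolM}}$ is the image of the product of Hermitian positive semidefinite cones $\lbra{\m{Q}_0\geq\m{0}}\times\lbra{\m{Q}_1\geq\m{0}}$ under the linear map $\Lambda:\sbra{\m{Q}_0,\m{Q}_1}\mapsto\m{\gamma}_R$, $\gamma_{-j}=\tr\sbra{\m{\Theta}_j\m{Q}_0}+\tr\sbra{\m{\Theta}_{gj}\m{Q}_1}$. For any linear map and any cone, $\sbra{\Lambda\sbra{\cC}}^*=\sbra{\Lambda^*}^{-1}\sbra{\cC^*}$ (this is the definition of the dual cone rewritten through the adjoint), and the semidefinite cone is self-dual under the trace pairing, so $\cC^*=\cC$. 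It remains to identify the adjoint: computing $\inp{\Lambda\sbra{\m{Q}_0,\m{Q}_1},\m{t}_R}=\m{\gamma}_R^T\m{t}_R$, the same bookkeeping as before yields $\m{\gamma}_R^T\m{t}_R=\tfrac12\sum_{\abs{j}\leq N-1}\gamma_{-j}t_j$, and substituting $\gamma_{-j}=\tr\sbra{\m{\Theta}_j\m{Q}_0}+\tr\sbra{\m{\Theta}_{gj}\m{Q}_1}$ together with \eqref{eq:TinTheta}--\eqref{eq:TginThetag} turns this into $\tfrac12\tr\sbra{\m{T}\m{Q}_0}+\tfrac12\tr\sbra{\m{T}_g\m{Q}_1}$. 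Hence $\Lambda^*\sbra{\m{t}_R}=\tfrac12\sbra{\m{T},\m{T}_g}$, and so $\cK_{\text{PolM}}^*=\lbra{\m{t}_R:\ \m{T}\geq\m{0},\ \m{T}_g\geq\m{0}}=\cK_{\text{VDM}}$ by \eqref{eq:KVDM}.

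\emph{The equivalence, and the main obstacle.} Because $\cK_{\text{VDF}}$, $\cK_{\text{PolF}}$ and $\cK_{\text{VDM}}$ are closed convex cones, each is its own bidual. So if $\cK_{\text{PolF}}=\cK_{\text{PolM}}$ then $\cK_{\text{VDM}}=\cK_{\text{PolM}}^*=\cK_{\text{PolF}}^*=\cK_{\text{VDF}}^{**}=\cK_{\text{VDF}}$; conversely, if $\cK_{\text{VDF}}=\cK_{\text{VDM}}$ then $\cK_{\text{PolF}}=\cK_{\text{VDF}}^*=\cK_{\text{VDM}}^*=\cK_{\text{PolM}}^{**}$, which equals $\cK_{\text{PolM}}$ once one knows the latter is closed---a standard property of finite-dimensional Gram / sum-of-squares cones, consistent with the pointwise identity $\sum_j\gamma_j e^{i2\pi j f}=\m{a}^H\sbra{f}\m{Q}_0\m{a}\sbra{f}+g\sbra{f}\,\m{a}^H\sbra{N-1,f}\m{Q}_1\m{a}\sbra{N-1,f}$ underlying \eqref{KPolM}. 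Combined with Theorem \ref{thm:BCVD}, which supplies $\cK_{\text{VDF}}=\cK_{\text{VDM}}$, this is exactly what casts the FS Vandermonde decomposition as the dual of the positive real lemma. The conceptual ingredients are light---dual of a conic hull, dual of a linear image, self-duality of the semidefinite cone, biduality---so I expect the real work to be the bookkeeping: evaluating the real bilinear pairing $\m{\gamma}_R^T\m{t}_R$ in terms of the complex coefficients and the associated trigonometric polynomial while tracking the $\tfrac{\sqrt 2}{2}$ normalization, and recognizing $\Lambda^*$ as $\m{t}_R\mapsto\sbra{\m{T},\m{T}_g}$ through \eqref{eq:TinTheta}--\eqref{eq:TginThetag}; verifying the closedness of $\cK_{\text{PolM}}$ is the one remaining loose end.
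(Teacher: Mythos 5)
Your proposal is correct and follows essentially the same route as the paper: the same real pairing computation $\m{t}_R^T\m{\gamma}_R=\tfrac12\sum_j\overline{t}_j\gamma_j$ identifies $\cK_{\text{VDF}}^*$ with $\cK_{\text{PolF}}$, the trace identities \eqref{eq:TinTheta}--\eqref{eq:TginThetag} together with self-duality of the PSD cone give $\cK_{\text{PolM}}^*=\cK_{\text{VDM}}$ (your adjoint-map phrasing is just a repackaging of the paper's direct computation), and the equivalence follows by biduality exactly as in the paper. The closedness of $\cK_{\text{PolM}}$ that you flag as a loose end is likewise only asserted, not proved, in the paper, so you have not omitted anything the paper supplies.
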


\begin{proof} We first show \eqref{eq:dualeq1}. Note that $\m{t}_R\in\cK_{\text{VDF}}$ if and only if
\equ{t_j = \sum_{k} p_k e^{-i2\pi j f_k}, \quad j = 1-N,\dots,N-1, \label{eq:tj_KVDF}}
where $p_k\geq 0$ and $f_k\in\cI$. For any $\m{\alpha}=\mbra{\alpha_{1-N},\dots,\alpha_{N-1}}^T\in\bR^{2N-1}$, we define $\m{\gamma}\in\bC^{2N-1}$ such that $\gamma_0 = \sqrt{2}\alpha_0$, $\gamma_j = \alpha_{-j}+i\alpha_{j}$ and $\gamma_{-j} = \alpha_{-j}-i\alpha_{j}$, $j=1,\dots,N-1$. It follows that $\m{\alpha} = \m{\gamma}_R$ and
\equ{\begin{split}\m{t}_R^T\m{\alpha}
&= \frac{\sqrt{2}}{2}t_0\cdot \frac{\sqrt{2}}{2}\gamma_0 + \Re \sum_{j=1}^{N-1}  \overline{t}_j \gamma _j\\
&= \frac{1}{2} \sum_{j=1-N}^{N-1} \overline{t}_j\gamma_j. \end{split} \label{eq:realtHgamma1}}
Inserting \eqref{eq:tj_KVDF} into \eqref{eq:realtHgamma1}, we have that
\equ{\m{t}_R^T\m{\alpha} = \frac{1}{2} \sum_{k} p_k \sum_{j=1-N}^{N-1} \gamma_j e^{i2\pi j f_k}. \label{eq:realtHgamma}}
By \eqref{eq:realtHgamma} and the definition of the dual cone, $\m{\alpha} = \m{\gamma}_R \in\cK_{\text{VDF}}^*$ if and only if the right hand of \eqref{eq:realtHgamma} is nonnegative for any $p_k\geq 0$ and any $f_k\in\cI$. The above condition holds if and only if $h(f) \coloneqq \sum_{j=1-N}^{N-1} \gamma_j e^{i2\pi j f}$ is nonnegative on $\cI$, or equivalently, $\m{\alpha} \in \cK_{\text{PolF}}$ by \eqref{KPolF}.

To show \eqref{eq:dualeq2}, we can similarly define $\m{t}$ for $\m{\alpha}\in\bR^{2N-1}$ such that $\m{\alpha} = \m{t}_R$. It follows that $\m{T}$ and $\m{T}_g$ are Hermitian. For any $\m{\gamma}_R\in \cK_{\text{PolM}}$, which can be expressed as in \eqref{KPolM}, we have that
\equ{\begin{split}\m{\gamma}_R^T\m{\alpha}
&= \frac{1}{2} \sum_{j=1-N}^{N-1} \overline{\gamma}_j t_j\\
&= \frac{1}{2} \sum_{j=1-N}^{N-1} \gamma_{-j} t_j \\
&= \frac{1}{2} \sum_{j=1-N}^{N-1} t_j \lbra{\tr\sbra{\m{\Theta}_j \m{Q}_0} + \tr\mbra{ \m{\Theta}_{gj} \m{Q}_1 }}. \end{split}}
Using the identities in \eqref{eq:TinTheta} and \eqref{eq:TginThetag}, we have that
\equ{\m{\gamma}_R^T\m{\alpha} = \frac{1}{2} \tr\sbra{\m{T}\m{Q}_0} + \frac{1}{2} \tr\sbra{\m{T}_g \m{Q}_1}. \label{eq:equ_dualcone}}
By the definition of the dual cone, $\m{\alpha} \in \cK_{\text{PolM}}^*$ if and only if $\m{\gamma}_R^T\m{\alpha}\geq 0$ for any $\m{\gamma}_R\in \cK_{\text{PolM}}$. Using \eqref{KPolM} and \eqref{eq:equ_dualcone}, the above condition holds if and only if $\tr\sbra{\m{T}\m{Q}_0} + \tr\sbra{\m{T}_g \m{Q}_1}\geq 0$ for any $\m{Q}_0\geq \m{0}$ and $\m{Q}_1\geq \m{0}$, which holds if and only if $\tr\sbra{\m{T}\m{Q}_0}\geq 0$ for any $\m{Q}_0\geq \m{0}$ and $\tr\sbra{\m{T}_g \m{Q}_1}\geq 0$ for any $\m{Q}_1\geq \m{0}$, and is further equivalent to the condition $\m{T}\geq \m{0}$ and $\m{T}_g\geq \m{0}$. The last condition is equivalent to  $\m{\alpha} = \m{t}_R\in \cK_{\text{VDM}}$ by \eqref{eq:KVDM}.

Finally, provided that $\cK_{\text{VDF}} = \cK_{\text{VDM}}$ and using \eqref{eq:dualeq1} and \eqref{eq:dualeq2}, we have that
\equ{\cK_{\text{PolF}} = \cK_{\text{VDF}}^* = \cK_{\text{VDM}}^* = \cK_{\text{PolM}}^{**}. }
Using the identify that $\cK_{\text{PolM}}^{**} = \cK_{\text{PolM}}$, which follows from the fact that $\cK_{\text{PolM}}$ is convex and closed \cite{boyd2004convex}, we conclude that $\cK_{\text{PolF}} = \cK_{\text{PolM}}$.
By similar arguments we can also show that $\cK_{\text{VDF}} = \cK_{\text{VDM}}$ provided that $\cK_{\text{PolF}} = \cK_{\text{PolM}}$.
\end{proof}

By Theorem \ref{thm:dualcone}, the FS Vandermonde decomposition on $\cI$ is linked via duality to the trigonometric polynomials nonnegative on the same interval. Moreover, the identity that $\cK_{\text{PolF}} = \cK_{\text{PolM}}$ provides a matrix form parametrization of the coefficients of these polynomials. In fact, this is exactly the Gram matrix parametrization concluded by the PRL in \cite{davidson2002linear,alkire2002convex} (see also \cite{dumitrescu2007positive}). This means that the PRL in \cite{davidson2002linear,alkire2002convex} can be obtained from the FS Vandermonde decomposition; conversely, the PRL also provides an alternative way to characterize the set of Toeplitz matrices admitting an FS Vandermonde decomposition.\footnote{Note that Theorem \ref{thm:BCVD} is stronger in the sense that it concludes that all such Toeplitz matrices always admit a decomposition containing $N$ atoms or less.} Therefore, it will not be surprising that, as we will see, for certain convex optimization problems the two techniques can be applied to give the primal and the dual problems, respectively. But note that there are indeed scenarios in which one technique can be applied while the other cannot. Examples will be provided in the ensuing sections to demonstrate the usefulness of the FS Vandermonde decomposition.

\begin{rem} The trigonometric polynomial $g(z) = r_{-1}z + r_0 + r_1z^{-1}$ that is nonnegative on $\cI$ and negative on its complement plays an important role in both the FS Vandermonde decomposition of Toeplitz matrices and the Gram matrix parametrization of trigonometric polynomials. It is worth noting that the polynomial defined in the present paper (recall \eqref{eq:gz1}-\eqref{eq:r1}) is different from those in \cite{dumitrescu2007positive,davidson2002linear,alkire2002convex}. As a matter of fact, while the polynomial we define applies uniformly to all intervals $\cI\in\bT$, certain modifications to the polynomial or additional operations such as sliding the interval have to be taken in \cite{dumitrescu2007positive, davidson2002linear,alkire2002convex} when $\cI$ contains certain critical points such as $0$ (or $1$) and $\frac{1}{2}$.
\end{rem}

\section{Application in the Theory of Moments} \label{sec:moment}

\subsection{Problem Statement}
For a given sequence $t_j$, $\abs{j}\leq N-1$ and a given domain $F$, a truncated moment problem entails determining whether there exists a positive Borel measure $\mu$ on $F$ such that \cite{akhiezer1965classical}
\equ{t_j = \int_F z^j \text{d} \mu\sbra{z}, \quad \abs{j}\leq N-1. \label{eq:tj_moment}}
The problem is further referred to as a truncated $K$-moment problem if $\mu$ is constrained to be supported on a semialgebraic set $K\subset F$, i.e., \cite{schmudgen1991thek}
\equ{\supp\sbra{\mu} \subset K. \label{eq:supp_moment}}
A measure $\mu$ satisfying \eqref{eq:tj_moment} is a representing measure for $\m{t}$; $\mu$ is a $K$-representing measure if it satisfies \eqref{eq:tj_moment} and \eqref{eq:supp_moment}.

The truncated moment and $K$-moment problems have been solved when $F$ is the real or the complex domain (note that the complex moment problem is defined slightly differently from \eqref{eq:tj_moment}) \cite{curto1991recursiveness,curto2000truncated,lasserre2009moments}. The truncated moment problem is also solved when $F$ is the unit circle, known as the truncated trigonometric moment problem \cite{grenander1958toeplitz,curto1991recursiveness}. In fact, the solution is given by evoking the Vandermonde decomposition of Toeplitz matrices: A representing measure $\mu$ exists if and only if the Toeplitz matrix $\m{T}$ formed using $\m{t}$ admits a Vandermonde decomposition, or equivalently, $\m{T}\geq \m{0}$ by Theorem \ref{thm:VD}. To the best of our knowledge, however, the truncated trigonometric $K$-moment problem is still open. This section is devoted to a solution to this problem by applying the FS Vandermonde decomposition.

Note that a semialgebraic set $K$ on the unit circle $\bT$ can be identified with the union of finite disjoint subintervals $\mbra{f_{Ll}, f_{Hl}} \subset \bT$, $l=1,\dots,J$. Therefore, the moment problem of interest can be restated as follows. For a given sequence $t_j$, $\abs{j}\leq N-1$, the truncated trigonometric $K$-moment problem entails determining whether there exists a $K$-representing measure $\mu$ on $\bT$ satisfying that
{\lentwo\equa{t_j
&=& \int_{\bT} e^{-i2\pi jf} \text{d} \mu\sbra{f}, \quad \abs{j}\leq N-1, \label{eq:tj_moment1} \\ \supp\sbra{\mu}
&\subset& K=\bigcup_{l=1}^J \mbra{f_{Ll}, f_{Hl}}\subset \bT. \label{eq:supp_moment1}} }

\subsection{Proposed Solution}
Let $\m{T}$ be the $N\times N$ Toeplitz matrix formed using the moment sequence $t_j$, $\abs{j}\leq N-1$. Suppose that an $r$-atomic $K$-representing measure $\mu$ for $\m{t}$ exists that satisfies \eqref{eq:tj_moment1} and \eqref{eq:supp_moment1}. It follows from \eqref{eq:supp_moment1} that
\equ{\mu\sbra{f} = \sum_{k=1}^r p_k \delta_{f_k}, \quad f_k \in K, \label{eq:mut}}
where $\delta_{f}$ is the Dirac delta function and $p_k>0$ denotes the density at $f_k$. Inserting \eqref{eq:mut} into \eqref{eq:tj_moment1}, we have that
\equ{t_j= \sum_{k=1}^r p_k e^{-i2\pi jf_k}, \quad \abs{j}\leq N-1,\; f_k\in K.}
It follows that
\equ{\m{T} = \sum_{k=1}^r p_k \m{a}\sbra{f_k}\m{a}^H\sbra{f_k}, \quad f_k\in K.}
This means that $\m{T}$ admits an $r$-atomic FS Vandermonde decomposition on $K$. It is easy to show that the above arguments also hold conversely. So we conclude the following result.
\begin{lem} An $r$-atomic $K$-representing measure $\mu$ for $\m{t}$ exists if and only if $\m{T}$ admits an $r$-atomic FS Vandermonde decomposition on $K$. \label{lem:moment_VD}
\end{lem}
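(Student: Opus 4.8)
The plan is to prove the equivalence in Lemma \ref{lem:moment_VD} directly, by unwinding the definitions in both directions, since this is essentially a bookkeeping argument rather than a deep one. The key observation is that a $K$-representing measure for $\m{t}$ that is $r$-atomic is, by definition of being supported on the discrete set $\supp\sbra{\mu}\subset K$, necessarily of the form $\mu = \sum_{k=1}^r p_k \delta_{f_k}$ with $f_k\in K$ distinct and $p_k>0$; conversely, any such discrete measure is trivially a positive Borel measure supported in $K$.

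For the ``only if'' direction, I would start from an $r$-atomic $K$-representing measure $\mu$, write it in the atomic form \eqref{eq:mut}, substitute into the moment condition \eqref{eq:tj_moment1} to obtain $t_j = \sum_{k=1}^r p_k e^{-i2\pi j f_k}$ for $\abs{j}\leq N-1$ with $f_k\in K$, and then recognize that assembling these $t_j$ into the Toeplitz matrix $\m{T}$ yields exactly $\m{T} = \sum_{k=1}^r p_k \m{a}\sbra{f_k}\m{a}^H\sbra{f_k}$ with $f_k\in K$ and $p_k>0$ --- i.e., an $r$-atomic FS Vandermonde decomposition on $K$. This is precisely the chain of displays already written in the paragraph preceding the lemma. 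For the ``if'' direction, I would simply reverse each of these steps: starting from an $r$-atomic FS Vandermonde decomposition $\m{T} = \sum_{k=1}^r p_k \m{a}\sbra{f_k}\m{a}^H\sbra{f_k}$ with $f_k\in K$ and $p_k>0$, reading off the entries gives $t_j = \sum_{k=1}^r p_k e^{-i2\pi j f_k}$, and then setting $\mu = \sum_{k=1}^r p_k \delta_{f_k}$ defines a positive Borel measure on $\bT$ with $\supp\sbra{\mu} = \lbra{f_1,\dots,f_r} \subset K$, which satisfies \eqref{eq:tj_moment1} by direct integration of the Dirac masses. Hence $\mu$ is an $r$-atomic $K$-representing measure for $\m{t}$.

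There is essentially no obstacle here: the only point requiring a moment of care is the claim that an $r$-atomic $K$-representing measure must have \emph{distinct} support points with strictly positive weights, so that the number of atoms in the two objects matches up exactly (and not, say, with collisions reducing the count). This follows because if two $f_k$ coincided we could merge them, contradicting the assumed atom count $r$, and a zero weight would likewise reduce the count; symmetrically, the FS Vandermonde decomposition is defined (via \eqref{eq:VD}) with distinct $f_k$ and $p_k>0$. Given that the substitution computations are already spelled out in the text, the proof can legitimately be phrased as: ``the forward direction is the computation preceding the lemma; the converse follows by reversing each step,'' and I would write it at roughly that level of detail.
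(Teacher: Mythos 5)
Your proposal is correct and follows essentially the same route as the paper: the forward direction is the substitution computation given in the text immediately before the lemma, and the converse is obtained by reversing those steps, exactly as the paper indicates. Your added remark about distinctness of the support points and strict positivity of the weights is a reasonable (if minor) clarification, but does not change the argument.
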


We next provide explicit conditions on $\m{T}$ by applying Theorem \ref{thm:BCVD}. In the case when $K$ is a single interval, the following theorem is a direct consequence by combining Lemma \ref{lem:moment_VD} and Theorem \ref{thm:BCVD}.

\begin{thm} Given $K=\mbra{f_L, f_H}$, an $r$-atomic $K$-representing measure $\mu$ for $\m{t}$ exists if and only if \eqref{eq:Tupsd} and \eqref{eq:Tunew} hold, where $r=\rank\sbra{\m{T}}$, and $g$ is defined by \eqref{eq:gz1}-\eqref{eq:r1}. Moreover, $\mu$ can be found by computing the FS Vandermonde decomposition of $\m{T}$ on $K$, and it is unique if $\m{T}$ or $\m{T}_g$ is rank-deficient. \label{thm:momentsolution1}
\end{thm}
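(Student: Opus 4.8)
The plan is to reduce Theorem \ref{thm:momentsolution1} to the combination of two facts already in hand: Lemma \ref{lem:moment_VD}, which translates the existence of an $r$-atomic $K$-representing measure into the existence of an $r$-atomic FS Vandermonde decomposition of $\m{T}$ on $K$, and Theorem \ref{thm:BCVD}, which characterizes when $\m{T}$ admits such a decomposition on the single interval $\cI=\mbra{f_L,f_H}=K$ by the two LMIs \eqref{eq:Tupsd} and \eqref{eq:Tunew}. The only genuine work is to verify that the atom count $r$ appearing in the two statements can be matched, i.e.\ that it equals $\rank\sbra{\m{T}}$.

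First I would invoke Lemma \ref{lem:moment_VD}: an $r$-atomic $K$-representing measure for $\m{t}$ exists if and only if $\m{T}$ admits an $r$-atomic FS Vandermonde decomposition on $K$. Next I would apply Theorem \ref{thm:BCVD} with $\cI=K$: $\m{T}$ admits \emph{some} FS Vandermonde decomposition on $K$ if and only if \eqref{eq:Tupsd} and \eqref{eq:Tunew} hold, and moreover such a decomposition can always be taken to consist of $N$ atoms or fewer. The subtle point is passing from ``some decomposition'' to ``an $r=\rank\sbra{\m{T}}$-atomic decomposition.'' When $\m{T}$ is rank-deficient ($r=\rank\sbra{\m{T}}\le N-1$), Theorem \ref{thm:BCVD} gives uniqueness, and by the proof of Theorem \ref{thm:VD} any PSD Toeplitz matrix of rank $r$ decomposes into exactly $r$ atoms, so $\m{T}$ has an $r$-atomic FS decomposition on $K$ and no decomposition with fewer atoms (the atoms $\m{a}\sbra{f_k}$ span the $r$-dimensional range of $\m{T}$, so at least $r$ distinct frequencies are needed). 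When $\m{T}$ has full rank ($r=N$), the construction in the ``if'' part of the proof of Theorem \ref{thm:BCVD} — fixing $f_N=f_L$, peeling off the rank-one term $p_N\m{a}\sbra{f_N}\m{a}^H\sbra{f_N}$, and applying the rank-deficient case to $\m{T}\sbra{\m{t}'}$ — yields an $N$-atomic decomposition on $K$, and again at least $N$ atoms are needed since the $\m{a}\sbra{f_k}$ must span $\bC^N$. Either way, $\m{T}$ admits an $r$-atomic FS Vandermonde decomposition on $K$ with $r=\rank\sbra{\m{T}}$ precisely when \eqref{eq:Tupsd} and \eqref{eq:Tunew} hold.

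Combining the two equivalences gives the stated ``if and only if.'' For the ``moreover'' part: the constructive route to the measure is to compute the FS Vandermonde decomposition of $\m{T}$ on $K$ — as described after Theorem \ref{thm:BCVD}, via the matrix pencil $\sbra{\m{V}_U^H\m{V}_L,\m{V}_U^H\m{V}_U}$ when $\m{T}$ is rank-deficient, and via the fix-$f_N=f_L$-then-recurse scheme when $\m{T}$ has full rank — and then set $\mu=\sum_{k=1}^r p_k\delta_{f_k}$ as in \eqref{eq:mut}. Uniqueness of $\mu$ under the additional hypothesis that $\m{T}$ or $\m{T}_g$ is rank-deficient is inherited directly from the uniqueness clause of Theorem \ref{thm:BCVD}, since \eqref{eq:mut} is a bijection between $r$-atomic decompositions on $K$ and $r$-atomic $K$-representing measures.

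I expect no serious obstacle here; this theorem is essentially a corollary, and the only thing to be careful about is the bookkeeping on the number of atoms versus $\rank\sbra{\m{T}}$ in the full-rank case, where the measure may genuinely need $N$ atoms and uniqueness can fail unless $\m{T}_g$ is rank-deficient. If one wanted a cleaner presentation, the whole argument can be compressed to: ``By Lemma \ref{lem:moment_VD} and Theorem \ref{thm:BCVD} with $\cI=K$, using that every PSD Toeplitz matrix of rank $r$ admits an $r$-atomic (and no smaller) decomposition.''
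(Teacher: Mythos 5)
Your proposal is correct and follows essentially the same route as the paper, which presents this theorem as a direct consequence of Lemma \ref{lem:moment_VD} combined with Theorem \ref{thm:BCVD} (with $\cI=K$), with the measure and its uniqueness read off from the FS Vandermonde decomposition. Your extra bookkeeping matching the atom count to $\rank\sbra{\m{T}}$ in the rank-deficient and full-rank cases is sound and merely makes explicit what the paper leaves implicit.
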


In the multiple frequency band case in which $K=\bigcup_{l=1}^J \mbra{f_{Ll}, f_{Hl}}$, corresponding to Corollary \ref{cor:BCVDM}, we have the following corollary of Theorem \ref{thm:momentsolution1}. The proof is trivial and is omitted.
\begin{cor} Given $K = \bigcup_{l=1}^J \mbra{f_{Ll}, f_{Hl}}$, a $K$-representing measure $\mu$ for $\m{t}$ exists if and only if
there exist sequences $\m{t}_l$, $l=1,\dots,J$ satisfying \eqref{eq:sumTleqT}-\eqref{eq:Tunewl}. \label{cor:momentsolutionM}
\end{cor}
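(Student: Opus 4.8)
The plan is to deduce the corollary by stacking two results already in hand: Lemma~\ref{lem:moment_VD}, which identifies the existence of a $K$-representing measure for $\m{t}$ with the existence of an FS Vandermonde decomposition of $\m{T}$ on $K$, and Corollary~\ref{cor:BCVDM}, which in turn characterizes exactly that decomposition by the feasibility of the system \eqref{eq:sumTleqT}--\eqref{eq:Tunewl}. The only thing that really needs checking is that the equivalence of Lemma~\ref{lem:moment_VD} persists when the representing measure is not assumed atomic and $K$ is a finite disjoint union of arcs; after that, Corollary~\ref{cor:BCVDM} can be quoted verbatim.

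For the ``if'' direction I would start from sequences $\m{t}_l$ satisfying \eqref{eq:sumTleqT}--\eqref{eq:Tunewl}, apply Corollary~\ref{cor:BCVDM} to conclude that $\m{T}=\m{T}\sbra{\sum_l\m{t}_l}$ admits an FS Vandermonde decomposition $\sum_k p_k\m{a}\sbra{f_k}\m{a}^H\sbra{f_k}$ with $p_k>0$ and $f_k\in K$, and then read off the atomic measure $\mu=\sum_k p_k\delta_{f_k}$ exactly as in the passage preceding Lemma~\ref{lem:moment_VD}; this $\mu$ is supported on $K$ and reproduces $\m{t}$.

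For the ``only if'' direction I would take an arbitrary $K$-representing measure $\mu$, use the disjointness of the arcs to split it as $\mu=\sum_{l=1}^J\mu_l$ with $\mu_l$ the restriction of $\mu$ to $\mbra{f_{Ll},f_{Hl}}$, and let $\m{t}_l$ be the moment sequence of $\mu_l$. Then \eqref{eq:sumTleqT} is additivity of the moments; \eqref{eq:Tupsdl} holds because $\m{T}\sbra{\m{t}_l}=\int_{\bT}\m{a}\sbra{f}\m{a}^H\sbra{f}\,\text{d}\mu_l\sbra{f}$ is an integral of PSD rank-one matrices; and \eqref{eq:Tunewl} follows from the same index rearrangement that produced \eqref{eq:Tgkl}, which gives $\m{T}_{g_l}\sbra{\m{t}_l}=\int_{\bT} g_l(f)\,\m{a}\sbra{N-1,f}\m{a}^H\sbra{N-1,f}\,\text{d}\mu_l\sbra{f}$, and this is PSD because $g_l(f)\geq 0$ on $\mbra{f_{Ll},f_{Hl}}\supset\supp\sbra{\mu_l}$.

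The computations are routine, so no single step is a real obstacle; the one point deserving care is that the ``only if'' direction must cover possibly non-atomic representing measures, which is why I would verify \eqref{eq:Tupsdl}--\eqref{eq:Tunewl} directly through the integral representations above rather than try to invoke Lemma~\ref{lem:moment_VD} arc by arc. If one is content to work only with atomic measures, as elsewhere in this section, the argument collapses to applying Lemma~\ref{lem:moment_VD} and Corollary~\ref{cor:BCVDM} to each arc separately. It is also worth noting in passing that, together with the bound of $N$ atoms guaranteed by Theorem~\ref{thm:BCVD}, the ``if'' direction shows that any $K$-representing measure can be replaced by one carrying at most $N$ atoms per arc.
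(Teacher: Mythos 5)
Your proposal is correct and follows exactly the route the paper intends: the paper omits the proof as ``trivial'' precisely because it is the combination of Lemma~\ref{lem:moment_VD} (equivalently Theorem~\ref{thm:momentsolution1} applied per arc) with Corollary~\ref{cor:BCVDM} that you spell out. Your extra care in the ``only if'' direction---verifying \eqref{eq:Tupsdl}--\eqref{eq:Tunewl} via the integral representations $\m{T}\sbra{\m{t}_l}=\int\m{a}\sbra{f}\m{a}^H\sbra{f}\,\text{d}\mu_l$ and $\m{T}_{g_l}\sbra{\m{t}_l}=\int g_l(f)\,\m{a}\sbra{N-1,f}\m{a}^H\sbra{N-1,f}\,\text{d}\mu_l$ so that non-atomic measures are covered---is a sensible way to fill the one detail the paper leaves implicit, since Lemma~\ref{lem:moment_VD} as stated concerns atomic measures.
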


Corollary \ref{cor:momentsolutionM} provides a numerical approach to finding a $K$-representing measure, if it exists, by solving the following feasibility problem that is a SDP:
\equ{\begin{split}
&\text{Find}\; \m{t}_l, \quad l=1,\dots,J,\\
&\text{subject to \eqref{eq:sumTleqT}-\eqref{eq:Tunewl}}. \end{split} \label{eq:feasprob}}
If a solution, denoted by $\m{t}_l^*$, $l=1,\dots,J$, can be found, then we can find representing measures for $\m{t}_l^*$ on each corresponding interval by Theorem \ref{thm:momentsolution1}, the sum of which finally form a $K$-representing measure for $\m{t}$. If \eqref{eq:feasprob} is infeasible, then no $K$-representing measure for $\m{t}$ exists.

\begin{rem} In the case when $\m{T}$ has full rank, the representing measure $\mu$ might not be unique, if it exists. By solving \eqref{eq:feasprob}, we actually find one among them. In this case the obtained measure $\mu$ may consist of as large as $NJ$ atoms. To possibly reduce the number of atoms (a.k.a. to simplify the obtained measure), we can find the one minimizing certain convex function of $\m{t}_l$, $l=1,\dots,J$, e.g., $\pm\tr\sbra{\m{T}\sbra{\m{t}_1}}$. By doing so, it is expected that certain $\m{T}\sbra{\m{t}_l}$'s are rank-deficient and thus result in a small number of atoms. \label{rem:simplifymeasure}
\end{rem}

Finally, it is interesting to note that the dual problem of \eqref{eq:feasprob} can be easily obtained using the result in Section \ref{sec:duality}. Using the cone notations \eqref{eq:feasprob} can be written as:
\equ{\begin{split}
&\text{Find}\; \m{t}_{R,l}\in \cK_{\text{VDM},l}, \quad l=1,\dots,J,\\
&\text{subject to } \sum_{l=1}^J \m{t}_{R,l} = \m{t}_R, \end{split} \label{eq:feasprob2}}
where $\m{t}_{R,l}\coloneqq \mbra{\m{t}_l}_R$, and $\cK_{\text{VDM},l}$ denotes $\cK_{\text{VDM}}$ in \eqref{eq:KVDM} with $g$ being $g_l$. The Lagrangian function is given by:
\equ{\begin{split}\cL\sbra{\m{t}_{R,1},\dots,\m{t}_{R,J}, \m{\alpha}}
&= \sbra{\sum_{l=1}^J \m{t}_{R,l} - \m{t}_R}^T\m{\alpha} \\
&= \sum_{l=1}^J \m{t}_{R,l}^T\m{\alpha} - \m{t}_R^T\m{\alpha}, \end{split}}
where $\m{t}_{R,l}\in \cK_{\text{VDM},l}$, $l=1,\dots,J$, and $\m{\alpha}$ is the Lagrangian multiplier. Using the knowledge of the dual cone, we have that
\equ{\min_{\m{t}_{R,l} \in \cK_{\text{VDM},l}} \cL = \left\{ \begin{array}{ll} -\m{t}_R^T\m{\alpha}, & \text{if } \m{\alpha}\in \cK_{\text{VDM},l}^*, l=1,\dots,J;\\ -\infty, & \text{otherwise.} \end{array} \right. }
Therefore, the dual problem is given by:
\equ{\max_{\m{\alpha}} \m{t}_R^T\m{\alpha}, \st \m{\alpha}\in \bigcap_{l=1}^J \cK_{\text{PolM},l}, \label{eq:dualprob}}
where we have used the identity that $\cK_{\text{VDM},l}^*=\cK_{\text{PolM},l}$ given by Theorem \ref{thm:dualcone}. Note that \eqref{eq:dualprob} can be cast as SDP following from \eqref{KPolM}.

\begin{exa} Suppose that the moment sequence $t_j$, $\abs{j}\leq N-1$ is generated from its $3$-atomic representing measure
\equ{\mu_1 = 0.7\delta_{0.1} + 2\delta_{0.25} + \delta_{0.7},}
which is plotted in Fig. \ref{Fig:moment} together with $\mu_j$, $j=2,\dots,5$ that will be solved for.
\begin{itemize}
 \item[1)] In the case of $N\geq 4$, we can form the Toeplitz matrix $\m{T}$ using $\m{t}$, having that $\rank\sbra{\m{T}} = 3 < N$. By Theorem \ref{thm:VD}, $\mu_1$ is the unique representing measure for $\m{t}$.
 \item[2)]Suppose that $N=3$ and $K = \mbra{0.05, 0.75}$. Since $K$ includes all the frequencies in $\mu_1$, one representing measure on $K$ has already been given by $\mu_1$. By the existence of the representing measure, it follows from Theorem \ref{thm:momentsolution1} that $\m{T}$ and $\m{T}_g$ are both PSD. Applying the proposed FS Vandermonde decomposition algorithm to the solution, the following $3$-atomic $K$-representing measure is obtained:
     \equ{\mu_2 =0.4630\delta_{0.05} + 2.2485\delta_{0.2383} + 0.9885\delta_{0.6927}, \notag}
     which is somehow similar to $\mu_1$. Note that the frequency $0.05$ in $\mu_2$ is nothing but the staring point of $K$, which has been deliberately chosen in the presented decomposition algorithm. Note that
 \item[3)]Suppose that $N=3$ and $K = \mbra{0.05, 0.3}\cup \mbra{0.65,0.75}$. One representing measure for $\m{t}$ is also given by $\mu_1$. To possibly find another one, we solve \eqref{eq:feasprob} using SDPT3 \cite{toh1999sdpt3} in Matlab and a solution is successfully found. Applying FS Vandermonde decomposition to the solution, a $6$-atomic $K$-representing measure is given by:
     \equ{\begin{split}\mu_3
     &= 0.1825\delta_{0.05} + 1.2284\delta_{0.1764}\\
     &\quad+ 1.2713\delta_{0.2722}+ 0.1546\delta_{0.65} \\
     &\quad + 0.5088\delta_{0.6917} + 0.3545\delta_{0.7436}. \end{split} \notag}
     In $\mu_3$, $0.05$ and $0.65$ are the starting points of the two intervals of $K$. The first three frequencies are located on the first interval and the other three frequencies are on the other interval.
  \item[4)] Suppose that $N=3$. We want to check whether one representing measure exists on $K = \mbra{0.2, 0.3} \cup \mbra{0.6, 0.8}$. To do so, we also solve \eqref{eq:feasprob} and a solution is successfully found. This means that a $K$-representing measure exists for $\m{t}$ by Corollary \ref{cor:momentsolutionM}. Applying the FS Vandermonde decomposition, a $6$-atomic $K$-representing measure is given by:
     \equ{\begin{split}\mu_4
     &= 1.9614\delta_{0.2} + 0.1296\delta_{0.2290}\\
     &\quad+ 0.4456\delta_{0.2891}+ 0.2437\delta_{0.6} \\
     &\quad + 0.3637\delta_{0.6467} + 0.5561\delta_{0.7962}. \end{split} }
 \item[5)] With the same settings as in 4), instead of solving \eqref{eq:feasprob}, we find the one maximizing $\tr\sbra{\m{T}\sbra{\m{t}_1}}$ among all feasible representing measures on $K$, following Remark \ref{rem:simplifymeasure}. The obtained solution $\sbra{\m{t}_1^*, \m{t}_2^*}$ satisfies that $\rank\sbra{\m{T}\sbra{\m{t}_1^*}}=\rank\sbra{\m{T}\sbra{\m{t}_2^*}}=2<N$, resulting in the following $4$-atomic representing measure:
     \equ{\begin{split} \mu_5
      &= 2.0837\delta_{0.2} + 0.4726\delta_{0.3}\\
      &\quad + 0.6218\delta_{0.6382} + 0.5219\delta_{0.8}. \end{split}}
     Compared to $\mu_4$, the number of atoms of $\mu_5$ is reduced.
 \item[6)] Suppose that $N=3$ and $K = \mbra{0.2, 0.3} \cup \mbra{0.6, 0.75}$. Then \eqref{eq:feasprob} is infeasible. This means that no $K$-representing measure for $\m{t}$ exists by Corollary \ref{cor:momentsolutionM}.
\end{itemize}
\end{exa}

\begin{figure*}
\centering
      \includegraphics[width=4in]{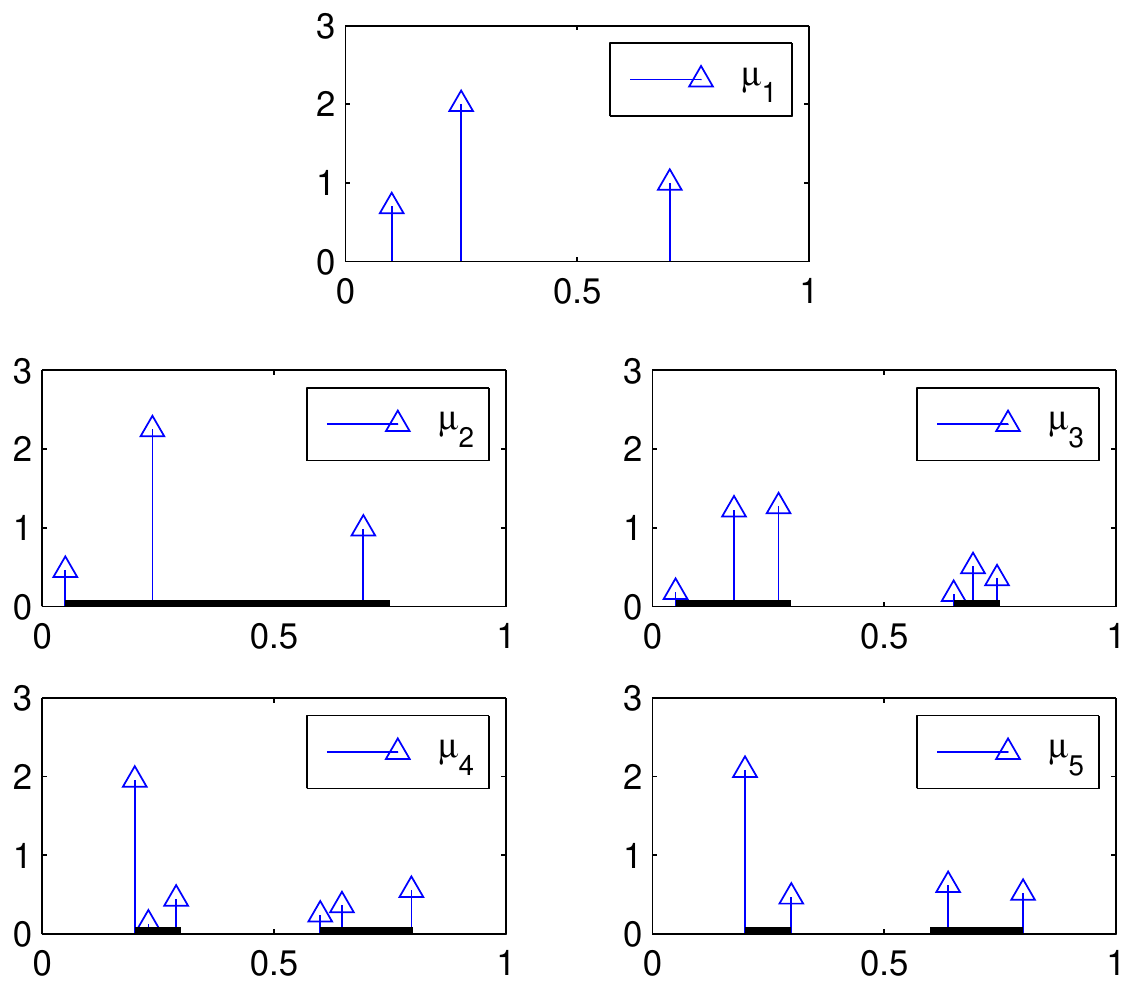}
\centering
\caption{Solved representing measures $\mu_j$, $j=2,\dots,5$ given a moment sequence generated from $\mu_1$ and a semialgebraic set $K$ (indicated by the line segments on the $x$-axis).} \label{Fig:moment}
\end{figure*}

\section{Application in Line Spectral Estimation} \label{sec:application}

\subsection{Problem Statement}
Line spectral estimation can be found in wide applications such as communications, radar, sonar, and so on \cite{stoica2005spectral}. In particular, we have the following data model in the absence of noise:\footnote{Note that the noisy case can be dealt with similarly with minor modifications on the presented solution. Discussions will be provided later.}
\equ{\m{y}^o = \sum_{k=1}^r \m{a}\sbra{f_k}s_k = \m{A}\sbra{\m{f}}\m{s}, \label{eq:model1}}
where $\m{y}^o\in\bC^N$ is a uniformly sampled signal (at a Nyquist rate), $f_k\in\bT$ and $s_k\in\bC$ are the normalized frequency and the complex amplitude of the $k$th sinusoid respectively, and $r$ is the number of sinusoids. To estimate the frequencies, we are given a part of the entries of $\m{y}^o$ that form the subvector $\m{y}_{\Omega}^o\in\bC^M$, where $\Omega$ denotes the set of sampling indexes and is of cardinality $M<N$. This frequency estimation problem is referred to as off-grid/continuous compressed sensing in \cite{tang2012compressed} in the sense that we have compressive data as in the pioneering work of compressed sensing \cite{candes2006robust}, but differently, the frequencies can take any continuous value in $\bT$ as opposed to the discrete setting in \cite{candes2006robust}.

In this section, we consider the case when the frequencies are known {\em a priori} to lie in an interval $\cI\subset\bT$. Inspired by the recent atomic norm techniques \cite{candes2013towards,candes2013super, bhaskar2013atomic, tang2012compressed,yang2015gridless}, the paper \cite{mishra2015spectral} proposed an FS atomic norm approach (or constrained atomic norm in the language of \cite{mishra2015spectral}) that was shown to achieve better performance than the standard atomic norm by exploiting the prior knowledge. In particular, define the (FS) set of atoms
\equ{\cA\sbra{\cI} \coloneqq \lbra{\m{a}\sbra{f_k,\phi_k}=\m{a}\sbra{f}\phi:\; f\in\cI, \;\abs{\phi}=1}. \label{eq:A_I}}
The FS atomic norm is the atomic norm induced by $\cA\sbra{\cI}$:
\equ{\begin{split}\norm{\m{y}}_{\cA\sbra{\cI}} \coloneqq
&\inf_{c_k>0,\m{a}_k\in\cA\sbra{\cI}}\lbra{\sum_k c_k:\; \m{y} = \sum_k c_k \m{a}_k}\\ =
&\inf_{f_k\in\cI,s_k}\lbra{\sum_k \abs{s_k}:\; \m{y} = \sum_k \m{a}\sbra{f_k}s_k}. \end{split} \label{eq:atomn_I}}
The following FS atomic norm minimization (FS-ANM) problem was proposed in \cite{mishra2015spectral}:
\equ{\min_{\m{y}} \norm{\m{y}}_{\cA\sbra{\cI}}, \st \m{y}_{\Omega} = \m{y}_{\Omega}^o. \label{eq:atomnmin_I}}
This means that, among all candidates $\m{y}$ which are consistent with the acquired samples $\m{y}_{\Omega}^o$, we find the one $\m{y}^*$ with the minimum FS atomic norm as the signal estimate, and the frequencies composing $\m{y}^*$ form the frequency estimates. Note that the noisy case can be dealt with similarly following a standard routine (by replacing the equality constraint in \eqref{eq:atomnmin_I} by $\twon{\m{y}_{\Omega} - \m{y}_{\Omega}^o}\leq \eta$ given the upper bound $\eta$ on the noise energy).
Note also that \eqref{eq:A_I}-\eqref{eq:atomnmin_I} degenerate to the existing standard forms in the case of $\cI=\bT$.

Since the FS atomic norm defined in \eqref{eq:atomn_I} is inherently semi-infinite programming (SIP), a finite-dimensional formulation of it is required to practically solve \eqref{eq:atomnmin_I}, which is dealt with in the ensuing section by applying the FS Vandermonde decomposition.

\subsection{SDP Formulation of FS Atomic Norm}

By applying the FS Vandermonde decomposition, the FS atomic norm is cast as SDP in the following theorem.
\begin{thm} It holds that
\equ{\begin{split} \norm{\m{y}}_{\cA\sbra{\cI}}
=& \min_{x, \m{t}} \frac{1}{2}x + \frac{1}{2}t_0,\\
& \st \begin{bmatrix} x & \m{y}^H \\ \m{y} & \m{T} \end{bmatrix}\geq \m{0} \text{ and } \m{T}_g\geq \m{0}, \end{split} \label{eq:atomnDsdp}}
where $g$ is as defined previously.
 \label{thm:atomnD}
\end{thm}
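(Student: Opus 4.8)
The plan is to establish the identity in \eqref{eq:atomnDsdp} by the standard two-sided argument: show that the right-hand side minimization (call its optimal value $\text{SDP}(\m{y})$) is both an upper bound and a lower bound for $\norm{\m{y}}_{\cA\sbra{\cI}}$. The FS Vandermonde decomposition of Theorem~\ref{thm:BCVD} is the engine that makes both directions work, exactly as the classical Vandermonde decomposition does in the unconstrained case $\cI = \bT$.

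\emph{Upper bound $\norm{\m{y}}_{\cA\sbra{\cI}} \leq \text{SDP}(\m{y})$.} First I would take any feasible $(x,\m{t})$ for the SDP, so that $\begin{bmatrix} x & \m{y}^H \\ \m{y} & \m{T} \end{bmatrix}\geq \m{0}$ and $\m{T}_g\geq \m{0}$. The second condition together with $\m{T}\geq\m{0}$ (which follows from the first via the Schur complement) lets me invoke Theorem~\ref{thm:BCVD}: $\m{T}$ admits an FS Vandermonde decomposition $\m{T} = \sum_{k} p_k \m{a}\sbra{f_k}\m{a}^H\sbra{f_k}$ with $p_k>0$ and $f_k\in\cI$. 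Writing $\m{T} = \m{A}\sbra{\m{f}}\m{P}\m{A}^H\sbra{\m{f}}$ with $\m{P}=\diag\sbra{p_1,\dots}$, the PSD block condition forces $\m{y}$ to lie in the range of $\m{A}\sbra{\m{f}}$, so $\m{y} = \m{A}\sbra{\m{f}}\m{P}\m{w}$ for some vector $\m{w}$ and the Schur complement gives $x \geq \m{w}^H\m{P}\m{w} = \sum_k p_k\abs{w_k}^2$. Then $\m{y} = \sum_k \m{a}\sbra{f_k}\sbra{p_k w_k}$ is a valid atomic decomposition with $f_k\in\cI$, whence $\norm{\m{y}}_{\cA\sbra{\cI}} \leq \sum_k \abs{p_k w_k} = \sum_k p_k\abs{w_k} \leq \frac{1}{2}\sum_k p_k\sbra{\abs{w_k}^2 + 1} \leq \frac{1}{2}\sbra{x + t_0}$, using $t_0 = \tr\sbra{\m{T}}/N \cdot N$... more precisely $t_0 = \sum_k p_k$ since $\mbra{\m{a}\sbra{f_k}}_0 = 1$. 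Taking the infimum over feasible $(x,\m{t})$ gives the upper bound.

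\emph{Lower bound $\norm{\m{y}}_{\cA\sbra{\cI}} \geq \text{SDP}(\m{y})$.} Here I would start from any atomic decomposition $\m{y} = \sum_k \m{a}\sbra{f_k}s_k$ with $f_k\in\cI$ (a near-optimal one, up to $\epsilon$), set $p_k = \abs{s_k}$ and $\phi_k = s_k/\abs{s_k}$, and define $\m{t}$ via $t_j = \sum_k p_k e^{-i2\pi j f_k}$, so that $\m{T} = \sum_k p_k\m{a}\sbra{f_k}\m{a}^H\sbra{f_k}$. By the ``only if'' direction of Theorem~\ref{thm:BCVD} (the $f_k\in\cI$ part), both $\m{T}\geq\m{0}$ and $\m{T}_g\geq\m{0}$ hold. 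Choosing $x = \sum_k p_k = t_0$ makes the block matrix $\begin{bmatrix} x & \m{y}^H \\ \m{y} & \m{T} \end{bmatrix}$ PSD: it equals $\sum_k p_k \begin{bmatrix} 1 \\ \phi_k\m{a}\sbra{f_k}\end{bmatrix}\begin{bmatrix} 1 \\ \phi_k\m{a}\sbra{f_k}\end{bmatrix}^H$ wait — one must be careful with the top-left entry; taking $x=\sum_k p_k$ and noting $\m{y}=\sum_k p_k\phi_k\m{a}\sbra{f_k}$ the rank-one expansion with vectors $\sqrt{p_k}\,[1;\,\overline{\phi}_k\m{a}\sbra{f_k}]$ (or the analogous outer product) is manifestly PSD. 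Then $(x,\m{t})$ is SDP-feasible with objective $\frac{1}{2}x+\frac{1}{2}t_0 = \sum_k p_k = \sum_k \abs{s_k}$, so $\text{SDP}(\m{y}) \leq \sum_k\abs{s_k}$; taking the infimum over atomic decompositions yields $\text{SDP}(\m{y})\leq\norm{\m{y}}_{\cA\sbra{\cI}}$. A small technical point is whether the infimum defining $\norm{\m{y}}_{\cA\sbra{\cI}}$ is attained and whether the SDP minimum is attained; since $\cI$ is compact and one may restrict to decompositions with at most $N$ atoms (Carath\'eodory-type bound, as already used in Theorem~\ref{thm:BCVD}), both infima are attained and the argument is clean.

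I expect the main obstacle to be purely bookkeeping: getting the PSD block certificates exactly right (the placement of conjugates, the $\frac{1}{2}$ normalization, and the fact that $t_0$ coincides with $\sum_k p_k$), and making sure the range condition ``$\m{y}\in\operatorname{range}\sbra{\m{A}\sbra{\m{f}}}$'' extracted from the Schur complement is used correctly. There is no genuine mathematical difficulty once Theorem~\ref{thm:BCVD} is in hand — the FS Vandermonde decomposition is precisely what converts the semi-infinite atomic-norm description into the finite LMI, and both inequalities are mirror images of the classical $\cI=\bT$ proof in \cite{tang2012compressed,bhaskar2013atomic}, with $\m{T}\geq\m{0}$ replaced by the pair $\m{T}\geq\m{0}$, $\m{T}_g\geq\m{0}$.
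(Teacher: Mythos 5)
Your proposal is correct and follows essentially the same route as the paper's proof: one direction builds an SDP-feasible point from an atomic decomposition via the rank-one certificate and the ``only if'' part of Theorem \ref{thm:BCVD}, and the other extracts an FS atomic decomposition from a feasible $(x,\m{t})$ via Theorem \ref{thm:BCVD}, the range/Schur-complement condition, and the same AM--GM step $p_k\abs{w_k}\leq\frac{1}{2}\sbra{p_k\abs{w_k}^2+p_k}$ (the paper writes this as $\frac{1}{2}\sbra{c_k^{*2}/p_k^*+p_k^*}\geq c_k^*$). The only differences are bookkeeping, e.g.\ parametrizing $\m{y}=\m{A}\sbra{\m{f}}\m{P}\m{w}$ instead of working with the coefficients $c_k^*$ directly.
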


\begin{proof} Let $F^*$ be the optimal objective value of \eqref{eq:atomnDsdp}. We need to show that $\norm{\m{y}}_{\cA\sbra{\cI}} = F^*$.

We first show that $F^* \leq \norm{\m{y}}_{\cA\sbra{\cI}}$. To do so, let $\m{y} = \sum_k c_k\m{a}\sbra{f_k,\phi_k}$ be an FS atomic decomposition of $\m{y}$ on $\cI$. Then let $\m{t}$ be such that $\m{T}(\m{t}) = \sum_k c_k \m{a}\sbra{f_k}\m{a}^H\sbra{f_k}$ and $x = \sum_k c_k$. By Theorem \ref{thm:BCVD}, we have that $\m{T}_g\geq\m{0}$. Moreover, it holds that
\equ{\begin{bmatrix} x & \m{y}^H \\ \m{y} & \m{T} \end{bmatrix} = \sum_k c_k\begin{bmatrix} \overline{\phi}_k \\ \m{a}\sbra{f_k} \end{bmatrix} \begin{bmatrix} \overline{\phi}_k \\ \m{a}\sbra{f_k} \end{bmatrix}^H \geq \m{0}.}
Therefore, $x$ and $\m{t}$ constructed above form a feasible solution to the problem in \eqref{eq:atomnDsdp}, at which the objective value equals
\equ{\frac{1}{2}x + \frac{1}{2}t_0 = \sum_k c_k.}
It follows that $F^* \leq \sum_k c_k$. Since the inequality holds for any FS atomic decomposition of $\m{y}$ on $\cI$, we have that $F^* \leq \norm{\m{y}}_{\cA\sbra{\cI}}$ by the definition of $\norm{\m{y}}_{\cA\sbra{\cI}}$.

On the other hand, suppose that $\sbra{x^*, \m{t}^*}$ is an optimal solution to the problem in \eqref{eq:atomnDsdp}. By the fact that $\m{T}(\m{t}^*)\geq \m{0}$ and $\m{T}_g(\m{t}^*)\geq \m{0}$ and applying Theorem \ref{thm:BCVD}, we have that $\m{T}\sbra{\m{t}^*}$ has an FS Vandermonde decomposition on $\cI$ as in \eqref{eq:VD} with $\sbra{r, p_k, f_k}$ denoted by $\sbra{r^*, p_k^*, f_k^*}$.
By the fact that $\begin{bmatrix} x^* & \m{y}^H \\ \m{y} & \m{T}\sbra{\m{t}^*} \end{bmatrix}\geq\m{0}$, we have that $\m{y}$ lies in the range space of $\m{T}\sbra{\m{t}^*}$ and thus has the following FS atomic decomposition:
\equ{\m{y} = \sum_{k=1}^{r^*} c_k^*\m{a}\sbra{f_k^*,\phi_k^*}, \quad f_k^*\in\cI. \label{eq:yatomdec}}
Moreover, it holds that
{\lentwo\equa{x^*
&\geq& \m{y}^H \mbra{\m{T}\sbra{\m{t}^*}}^{\dag}\m{y} = \sum_{k=1}^{r^*} \frac{c_k^{*2}}{p_k^*},\\ t_0^*
&=& \sum_{k=1}^{r^*} p_k^*.
}}It therefore follows that
\equ{\begin{split}F^*
&= \frac{1}{2}x^* + \frac{1}{2}t_0^* \\
&\geq \frac{1}{2}\sum_k \frac{c_k^{*2}}{p_k^*} + \frac{1}{2}\sum_k p_k^* \\
&\geq \sum_{k} c_k^*\\
&\geq \norm{\m{y}}_{\cA\sbra{\cI}}. \end{split} \label{eq:Fleqsum} }
Combining \eqref{eq:Fleqsum} and the inequality that $F^* \leq \norm{\m{y}}_{\cA\sbra{\cI}}$ as shown previously, we conclude that $F^* = \norm{\m{y}}_{\cA\sbra{\cI}}$ and complete the proof. At last, it is worth noting that by \eqref{eq:Fleqsum} it must hold that $p_k^* = c_k^*$ and $\norm{\m{y}}_{\cA\sbra{\cI}} = \sum_{k} c_k^*$. Therefore, the FS atomic decomposition in \eqref{eq:yatomdec} must achieve the FS atomic norm.
\end{proof}

\begin{rem} Note that the SDP formulation of the FS atomic norm presented in Theorem \ref{thm:atomnD} can be easily extended to the multiple frequency band case by applying Corollary \ref{cor:BCVDM}, to be specific, by replacing the constraints in \eqref{eq:atomnDsdp} resulting from \eqref{eq:Tupsd} and \eqref{eq:Tunew} by those in \eqref{eq:sumTleqT}-\eqref{eq:Tunewl}. The proof of Theorem \ref{thm:atomnD} can still be applied in this case with minor modifications. \label{rem:SDP_MB}
\end{rem}

It immediately follows from Theorem \ref{thm:atomnD} that \eqref{eq:atomnmin_I} can be written as the following SDP:
\equ{\begin{split}
& \min_{\m{y}, x, \m{t}} \frac{1}{2}x + \frac{1}{2}t_0,\\
& \st \begin{bmatrix} x & \m{y}^H \\ \m{y} & \m{T} \end{bmatrix}\geq \m{0}, \m{T}_g\geq \m{0}  \text{ and } \m{y}_{\Omega} = \m{y}_{\Omega}^o. \end{split} \label{eq:CANM_sdp}}
Note that \eqref{eq:CANM_sdp} can be solved using off-the-shelf SDP solvers such as SDPT3. Given its solution, the frequencies can be retrieved from the FS Vandermonde decomposition of $\m{T}$. Moreover, as in the standard atomic norm method, the Toeplitz matrix $\m{T}$ in \eqref{eq:CANM_sdp} can be interpreted as the ``data covariance matrix'' \cite{yang2015gridless,yang2016vandermonde}. By solving \eqref{eq:CANM_sdp} we actually fit the data covariance matrix $\m{T}$ by exploiting its structures, e.g., PSDness (the first constraint), Toeplitz (explicitly imposed) and low rank ($t_0$ in the objective is proportional to the nuclear or trace norm of $\m{T}$), and its connection to the acquired data $\m{y}_{\Omega}^o$ (the first and the last constraints). But different from the standard atomic norm method, more precise knowledge of $\m{T}$ is exploited in the FS atomic norm method by additionally including the constraint $\m{T}_g\geq \m{0}$.

Before proceeding to the next subsection, we note that \eqref{eq:atomnmin_I} was solved by studying its dual in \cite{mishra2015spectral}. In particular, the dual of \eqref{eq:atomnmin_I} is given by:
\equ{\max_{\m{z}} \Re \sbra{\m{y}_{\Omega}^o \m{z}_{\Omega}}, \st \norm{\m{z}}_{\cA\sbra{\cI}}^*\leq 1 \text{ and } \m{z}_{\Omega^c}=\m{0}, \label{eq:dualprb}}
where $\Omega^c$ denotes the complement of $\Omega$ and $\norm{\m{z}}_{\cA\sbra{\cI}}^*$ is the dual FS atomic norm. By the fact that
\equ{\norm{\m{z}}_{\cA\sbra{\cI}}^* = \sup_{\m{a}\in\cA\sbra{\cI}} \Re\sbra{\m{a}^H\m{z}} = \sup_{f\in\cI} \abs{\m{a}^H\sbra{f}\m{z}}, }
the constraint that $\norm{\m{z}}_{\cA\sbra{\cI}}^*\leq 1$ can be cast as the following:
\equ{\abs{\m{a}^H\sbra{f}\m{z}}\leq 1 \text{ for any } f\in\cI, \label{eq:absazleq1}}
where
\equ{q(f)\coloneqq \m{a}^H\sbra{f}\m{z} \label{eq:qf}}
is referred to as the dual polynomial \cite{candes2013towards,mishra2015spectral}.
It follows that $1-\abs{q(f)}^2$ is a Hermitian trigonometric polynomial nonnegative on $\cI$ and, by the PRL, admits a Gram matrix parametrization as in \eqref{KPolM}. With some further derivations that we will omit, it can be shown that \eqref{eq:absazleq1} holds if and only if the unit polynomial (the right hand side of the inequality in \eqref{eq:absazleq1}) has the following Gram matrix parametrization:
\equ{\tr\sbra{\m{\Theta}_j \m{Q}_0} + \tr\mbra{ \m{\Theta}_{gj} \m{Q}_1} = \left\{ \begin{array}{ll} 1, & \text{if } j=0,\\ 0, & \text{otherwise}, \end{array}\right. \label{eq:sdpbrl1}}
where $\m{Q}_0$ and $\m{Q}_1$ satisfy
\equ{\begin{bmatrix}1 & \m{z}^H \\ \m{z} & \m{Q}_0 \end{bmatrix} \geq \m{0} \text{ and } \m{Q}_1\geq \m{0}. \label{eq:sdpbrl2}}
In fact, the characterization of \eqref{eq:absazleq1} using \eqref{eq:sdpbrl1} and \eqref{eq:sdpbrl2} is nothing but the result of the bounded real lemma (BRL) for trigonometric polynomials \cite{davidson2002linear,dumitrescu2007positive}. This can be viewed as a more precise result of the PRL when dealing with bounded polynomials as in \eqref{eq:absazleq1}. Finally, \eqref{eq:dualprb} is cast as the following SDP:
\equ{\max_{\m{z}, \m{Q}_0,\m{Q}_1} \Re \sbra{\m{y}_{\Omega}^o \m{z}_{\Omega}}, \st \eqref{eq:sdpbrl1}, \eqref{eq:sdpbrl2} \text{ and } \m{z}_{\Omega^c}=\m{0}. \label{eq:dualSDP}}
Without surprise, it follows from a standard Lagrangian analysis that \eqref{eq:dualSDP} is the dual of \eqref{eq:CANM_sdp} (note that the analysis uses \eqref{eq:TinTheta} and \eqref{eq:TginThetag} and will be left to interested readers). Since strong duality holds \cite{boyd2004convex}, the solution to \eqref{eq:dualSDP} can be obtained for free when solving \eqref{eq:CANM_sdp} using a primal-dual algorithm, and vice versa.

In summary, the FS Vandermonde decomposition can be applied to provide a primal SDP formulation of \eqref{eq:atomnmin_I}, while the trigonometric polynomial based technique in \cite{mishra2015spectral} provides a dual SDP formulation. Moreover, the FS Vandermonde decomposition also provides a new method for frequency retrieval. In fact, it is found that the new method results in higher numerical stability, as compared to the root-finding method in \cite{candes2013towards,mishra2015spectral}. This can be explained as follows. By using the FS Vandermonde decomposition, we can always determine the number of frequencies first by computing $\rank\sbra{\m{T}}$, which can effectively reduce the problem dimension and improve stability. In contrast to this, the root-finding method requires to solve all, up to $2N-2$, roots of the polynomial $1-\abs{q(f)}^2$, among which appropriate ones (those with unit modulus) are then selected to produce the frequencies.

\subsection{Computational Complexity}
We next analyze the computational complexity of the presented FS atomic norm method, to be specific, the complexity of solving the SDP in \eqref{eq:CANM_sdp}. To do so, we consider the general multiple band case in which, according to Remark \ref{rem:SDP_MB}, \eqref{eq:CANM_sdp} becomes:
\equ{\begin{split}
& \min_{\m{y}, x, \m{t}_l} \frac{1}{2}x + \frac{1}{2}\sum_{l=1}^J t_{l0},\\
& \st \begin{bmatrix} x & \m{y}^H \\ \m{y} & \sum_{l=1}^J \m{T}(\m{t}_l) \end{bmatrix}\geq \m{0}, \\
&\phantom{\st} \m{T}(\m{t}_l) \geq \m{0}, \; \m{T}_g(\m{t}_l)\geq \m{0}, l=1,\dots,J,\\
&\phantom{\st} \m{y}_{\Omega} = \m{y}_{\Omega}^o. \end{split} \label{eq:CANM_sdp_MB}}
Evidently, the SDP in \eqref{eq:CANM_sdp_MB} has $n=O(JN)$ free variables and $m=2J+1$ LMIs, and the $i$th LMI has size of $k_i\times k_i$ with $k_i=O(N)$. It follows from \cite{ben2013lectures} that a primal-dual algorithm for \eqref{eq:CANM_sdp_MB} has a computational complexity on the order of
\equ{\sbra{1+\sum_{i=1}^m k_i}^{\frac{1}{2}} n\sbra{n^2+n\sum_{i=1}^m k_i^2 + \sum_{i=1}^m k_i^3} = O\sbra{J^{3.5}N^{4.5}}. \label{eq:complexity}}
By arguments similar to those above, the standard atomic norm method in the absence of prior knowledge has a computational complexity of $O\sbra{N^{4.5}}$. This together with
\eqref{eq:complexity} indicates that, with a fixed number of intervals $J$, the presented FS atomic norm method has a complexity higher than the standard atomic norm method by a constant factor and the factor increases with $J$.

\subsection{Numerical Simulation}
We provide a simple illustrative example below to demonstrate the advantage of using the prior knowledge for frequency estimation.

\begin{exa} Consider a line spectrum composed of $K=3$ frequencies $\m{f}=[0.22, 0.23, 0.28]^T$ as shown in Fig. \ref{Fig:LSE}. To estimate/recover the spectrum, $M=16$ randomly located noiseless samples are acquired among $N=64$ uniform samples. The standard ANM and the FS-ANM methods are implemented using SDPT3 to estimate the line spectrum. In FS-ANM, the prior knowledge that the frequencies lie in $\cI=\mbra{0.2, 0.3}$ is used. The estimation results are presented in Fig. \ref{Fig:LSE}. It can be seen that FS-ANM exactly recovers the spectrum but ANM does not. For both ANM and FS-ANM, the recovered frequencies retrieved using the Vandermonde decomposition match the locations at which the dual polynomials have unit magnitude. For FS-ANM the frequencies computed using the FS Vandermonde decomposition have recovery errors on the order of $10^{-10}$ while those computed using the root-finding method have errors on the order of $10^{-6}$.
\end{exa}

\begin{figure}
\centering
  \subfigure[ANM]{
    \label{Fig:ANM}
    \includegraphics[width=3in]{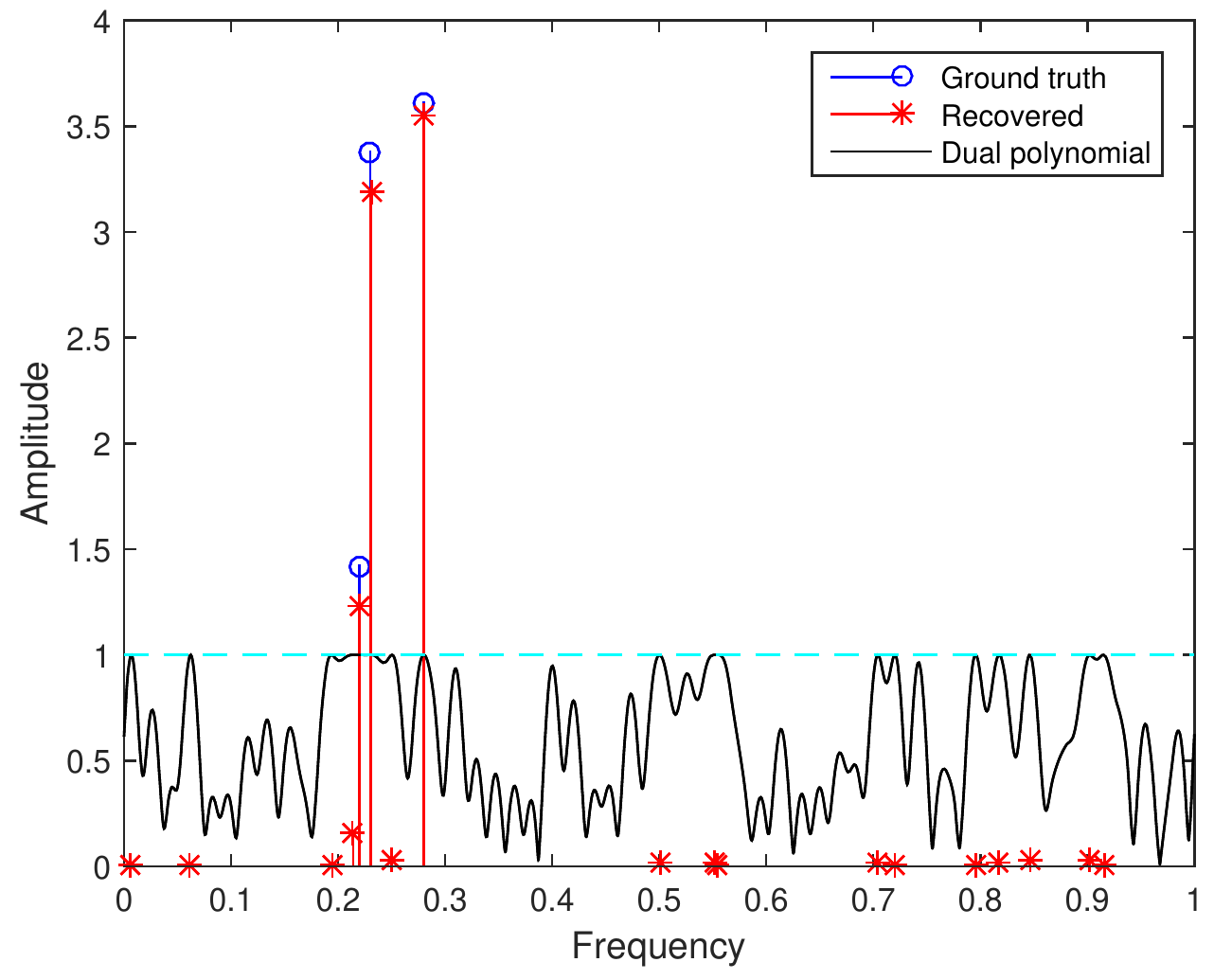}}%
  \subfigure[FS-ANM]{
    \label{Fig:CANM}
    \includegraphics[width=3in]{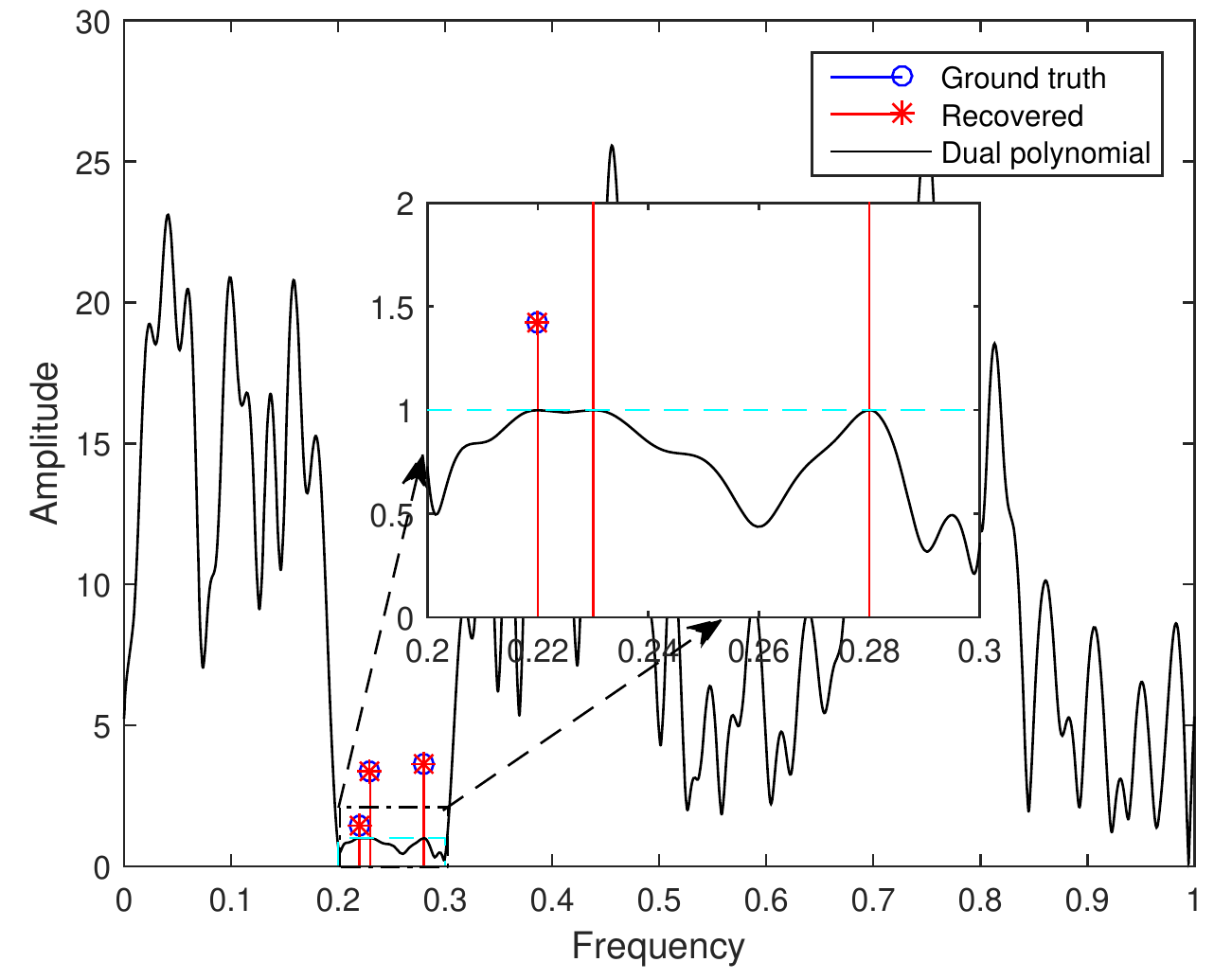}}
\centering
\caption{Line spectral estimation results of (a) ANM and (b) FS-ANM.} \label{Fig:LSE}
\end{figure}

Note that the presented method can deal with noise with minor modifications, as shown in \cite{mishra2015spectral}. In the noisy case, a simulation has been included in \cite{mishra2015spectral} to compare the signal recovery errors of the atomic norm method in cases with and without the prior knowledge. It is shown that ``the prior information formulation yields a higher stability in presence of noise.'' Readers are referred to \cite[Section VIII-B]{mishra2015spectral} for detail.


\subsection{Extension to FS Atomic $\ell_0$ Norm}
In this subsection, we provide an example in which the FS Vandermonde decomposition result is applicable but the theory of trigonometric polynomials is not. In particular, we study the FS atomic $\ell_0$ norm defined by:
\equ{\begin{split}\norm{\m{y}}_{\cA\sbra{\cI},0} \coloneqq
&\inf_{c_k>0,\m{a}_k\in\cA\sbra{\cI}}\lbra{\cK:\; \m{y} = \sum_{k=1}^{\cK} c_k \m{a}_k}\\ =
&\inf_{f_k\in\cI,s_k}\lbra{\cK:\; \m{y} = \sum_{k=1}^{\cK} \m{a}\sbra{f_k}s_k}. \end{split} \label{eq:atom0n_I}}
$\norm{\m{y}}_{\cA\sbra{\cI},0}$ is of interest since it exploits sparsity to the greatest extent possible, while $\norm{\m{y}}_{\cA\sbra{\cI}}$ is in fact its convex relaxation. It has been vastly demonstrated in the literature on compressed sensing that improved performance can usually be obtained by solving (or approximately solving) $\ell_0$ norm based problems (see, e.g., \cite{candes2008enhancing,andersson2014new,yang2016vandermonde}). More recently, a new trend of frequency estimation is to directly solve the $\ell_0$ norm based formulations using nonconvex optimization techniques for low rank matrix recovery \cite{cho2016fast,cai2016fast}. To do so, the key is to formulate the frequency estimation problem in the continuous setting as a matrix rank minimization problem. In the context of the FS atomic $\ell_0$ norm, the following result can be obtained by applying the FS Vandermonde decomposition.

\begin{thm} It holds that
\equ{\begin{split} \norm{\m{y}}_{\cA\sbra{\cI},0}
=& \min_{x, \m{t}} \rank\sbra{\m{T}},\\
& \st \begin{bmatrix} x & \m{y}^H \\ \m{y} & \m{T} \end{bmatrix}\geq \m{0} \text{ and } \m{T}_g\geq \m{0}, \end{split} \label{eq:atom0nDsdp}}
where $g$ is as defined previously.
 \label{thm:atom0nD}
\end{thm}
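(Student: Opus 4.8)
The plan is to prove the identity in Theorem~\ref{thm:atom0nD} by establishing two inequalities between the left-hand side $\norm{\m{y}}_{\cA\sbra{\cI},0}$ and the optimal value of the rank-minimization problem on the right, mirroring the structure of the proof of Theorem~\ref{thm:atomnD} but with $\rank\sbra{\m{T}}$ in place of the trace functional. Let $R^*$ denote the optimal value of the minimization problem in \eqref{eq:atom0nDsdp}.

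First I would show $R^*\leq \norm{\m{y}}_{\cA\sbra{\cI},0}$. Take any FS atomic decomposition $\m{y}=\sum_{k=1}^{\cK} c_k\m{a}\sbra{f_k,\phi_k}$ with $f_k\in\cI$ achieving (or approaching) the infimum, so $\cK=\norm{\m{y}}_{\cA\sbra{\cI},0}$; without loss of generality the $f_k$ are distinct and $c_k>0$. Set $\m{T}=\m{T}\sbra{\m{t}}=\sum_k c_k\m{a}\sbra{f_k}\m{a}^H\sbra{f_k}$ and $x=\sum_k c_k$. Exactly as in the proof of Theorem~\ref{thm:atomnD}, the bordered matrix $\begin{bmatrix} x & \m{y}^H \\ \m{y} & \m{T} \end{bmatrix}=\sum_k c_k\begin{bmatrix}\overline\phi_k\\ \m{a}\sbra{f_k}\end{bmatrix}\begin{bmatrix}\overline\phi_k\\ \m{a}\sbra{f_k}\end{bmatrix}^H$ is PSD, and by Theorem~\ref{thm:BCVD} (the ``only if'' direction, since $\m{T}$ has an FS Vandermonde decomposition on $\cI$) we get $\m{T}_g\geq\m{0}$. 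Hence $(x,\m{t})$ is feasible, and since the $\cK$ atoms $\m{a}\sbra{f_k}$ with distinct frequencies are linearly independent when $\cK\leq N$ — and if $\cK>N$ the bound is vacuous since $\rank\sbra{\m{T}}\leq N$ — we have $\rank\sbra{\m{T}}\leq\cK$, giving $R^*\leq\cK=\norm{\m{y}}_{\cA\sbra{\cI},0}$.

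For the reverse inequality $R^*\geq\norm{\m{y}}_{\cA\sbra{\cI},0}$, let $(x^*,\m{t}^*)$ be an optimal solution of \eqref{eq:atom0nDsdp}, so $r^*:=\rank\sbra{\m{T}\sbra{\m{t}^*}}=R^*$. Since $\m{T}\sbra{\m{t}^*}\geq\m{0}$ and $\m{T}_g\sbra{\m{t}^*}\geq\m{0}$, Theorem~\ref{thm:BCVD} gives an FS Vandermonde decomposition $\m{T}\sbra{\m{t}^*}=\sum_{k=1}^{r^*}p_k^*\m{a}\sbra{f_k^*}\m{a}^H\sbra{f_k^*}$ with $f_k^*\in\cI$ and exactly $r^*$ atoms. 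Because $\begin{bmatrix} x^* & \m{y}^H \\ \m{y} & \m{T}\sbra{\m{t}^*}\end{bmatrix}\geq\m{0}$, the Schur complement condition forces $\m{y}$ into the range of $\m{T}\sbra{\m{t}^*}$, which is spanned by $\{\m{a}\sbra{f_k^*}\}_{k=1}^{r^*}$; hence $\m{y}=\sum_{k=1}^{r^*}\m{a}\sbra{f_k^*}s_k^*$ for suitable $s_k^*\in\bC$, an FS representation of $\m{y}$ using $r^*$ atoms. By definition of the FS atomic $\ell_0$ norm, $\norm{\m{y}}_{\cA\sbra{\cI},0}\leq r^*=R^*$. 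Combining the two inequalities yields the claimed identity.

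The two directions are both short given Theorem~\ref{thm:BCVD}, so there is no single hard obstacle; the main point requiring care is the range/span argument in the second direction — namely that $\m{y}\in\mathrm{range}\sbra{\m{T}\sbra{\m{t}^*}}$ together with the Vandermonde decomposition of $\m{T}\sbra{\m{t}^*}$ lets us write $\m{y}$ as a combination of \emph{exactly} the atoms $\m{a}\sbra{f_k^*}$, $k=1,\dots,r^*$, and hence with at most $r^*$ atoms from $\cA\sbra{\cI}$. One should also note that, unlike in Theorem~\ref{thm:atomnD}, the variable $x$ plays no role in the objective here, so the optimization over $x$ simply reduces to the feasibility requirement $x\geq\m{y}^H\mbra{\m{T}\sbra{\m{t}^*}}^{\dag}\m{y}$ (on the range) and can be taken as slack. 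It is also worth remarking, as the surrounding text anticipates, that this argument uses only the FS Vandermonde decomposition (Theorem~\ref{thm:BCVD}) and not the duality/Gram-matrix machinery of Section~\ref{sec:duality}, since the rank functional is nonconvex and has no counterpart on the trigonometric-polynomial side.
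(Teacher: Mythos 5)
Your proposal is correct and follows essentially the same route as the paper: construct a feasible $(x,\m{t})$ from a minimal FS atomic decomposition to get $R^*\leq \norm{\m{y}}_{\cA\sbra{\cI},0}$, then use Theorem \ref{thm:BCVD} plus the Schur-complement/range argument (as in the proof of Theorem \ref{thm:atomnD}) to extract an $r^*$-atomic decomposition of $\m{y}$ from an optimal solution, giving the reverse inequality. Your write-up is in fact a bit more explicit than the paper's sketch (e.g., the distinct-frequency and $\rank\sbra{\m{T}}\leq\cK$ points), and it fixes the apparent swap of the two inequality directions in the paper's terse statement of the same argument.
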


\begin{proof} The proof is similar to that of Theorem \ref{thm:atomnD}. At the first step, by applying the FS Vandermonde decomposition, we can construct a feasible solution, as in the proof of Theorem \ref{thm:atomnD}, to the optimization problem in \eqref{eq:atom0nDsdp}, which concludes that $\norm{\m{y}}_{\cA\sbra{\cI},0}\leq r^*$, where $r^*$ denotes the optimal objective value of \eqref{eq:atom0nDsdp}. At the second step, for any optimal solution that achieves the optimal value $r^*$, we can similarly obtain an $r^*$-atomic FS decomposition of $\m{y}$, which results in that $\norm{\m{y}}_{\cA\sbra{\cI},0}\geq r^*$. So we complete the proof.
\end{proof}

It follows from Theorem \ref{thm:atom0nD} that $\norm{\m{y}}_{\cA\sbra{\cI},0}$ can be cast as a rank minimization problem, while solving (or approximately solving) the resulting optimization problem is beyond the scope of this paper. It is worth noting that, since $\norm{\m{y}}_{\cA\sbra{\cI},0}$ is nonconvex, a trigonometric polynomial based technique, as used for $\norm{\m{y}}_{\cA\sbra{\cI}}$ in \cite{mishra2015spectral}, cannot be applied in this case to provide a finite-dimensional formulation.

\section{Conclusion} \label{sec:conclusion}

In this paper, the FS Vandermonde decomposition of Toeplitz matrices on a given interval was studied. The new result generalizes the classical Vandermonde decomposition result. It was shown by duality to be connected to the theory of trigonometric polynomials. It was also applied to provide a solution to the classical truncated trigonometric $K$-moment problem and a primal SDP formulation of the recent FS atomic norm for line spectral estimation with prior knowledge.


\end{document}